\definecolor{ao(english)}{rgb}{0.0, 0.5, 0.0}
\def\softO{\ensuremath{{O}{\,\tilde{ }\,}}}
\def\supp{\ensuremath{\text{supp}}}
\def\conv{\ensuremath{\text{conv}}}
\def\calA{\ensuremath{\mathcal{A}}}
\def\calB{\ensuremath{\mathcal{B}}}
\def\calC{\ensuremath{\mathcal{C}}}
\def\calD{\ensuremath{\mathcal{D}}}
\def\calM{\ensuremath{\mathcal{M}}}
\def\N {\ensuremath{\mathbb{N}}}
\def\Q {\ensuremath{\mathbb{Q}}}
\def\R {\ensuremath{\mathbb{R}}}
\def\Z {\ensuremath{\mathbb{Z}}}
\def\KKbar {\ensuremath{\overline{\mathbf{K}}}}
\def\KK {\ensuremath{{\mathbf{K}}}}
\def\OpenBKK{\ensuremath{\mathscr{O}_{\rm BKK}}}
\def\OpenO{\ensuremath{\mathscr{O}}}
\def\Openinit{\ensuremath{\Omega_2}}
\def\Openstart{\ensuremath{\Omega_1}}
\def\Openbounded{\ensuremath{\Omega_3}}
\def\Open{\ensuremath{\Omega}}
\DeclareBoldMathCommand{\bw}{w}
\def\KKbar {\ensuremath{\overline{\mathbf{K}}}}
\def\rank{\ensuremath{{\rm rank}}}
\def\wdeg{\ensuremath{{\rm wdeg}}}
\def\init{\ensuremath{{\rm init}}}
\DeclareBoldMathCommand{\e}{e}
\DeclareBoldMathCommand{\qq}{q}
\DeclareBoldMathCommand{\f}{f}
\DeclareBoldMathCommand{\p}{p}
\DeclareBoldMathCommand{\g}{g}
\DeclareBoldMathCommand{\k}{k}
\DeclareBoldMathCommand{\r}{r}
\DeclareBoldMathCommand{\h}{h}
\DeclareBoldMathCommand{\u}{u}
\DeclareBoldMathCommand{\x}{x}
\DeclareBoldMathCommand{\w}{w}
\DeclareBoldMathCommand{\j}{i}
\DeclareBoldMathCommand{\i}{i}
\DeclareBoldMathCommand{\bell}{\ell}
\DeclareBoldMathCommand{\bX}{x}
\DeclareBoldMathCommand{\bY}{Y}
\DeclareBoldMathCommand{\bZ}{Z}
\DeclareBoldMathCommand{\bZ}{Z}
\DeclareBoldMathCommand{\bcalA}{\mathcal{A}}
\DeclareBoldMathCommand{\bcalB}{\mathcal{B}}
\DeclareBoldMathCommand{\bcalC}{\mathcal{C}}
\DeclareBoldMathCommand{\balpha}{\alpha}
\DeclareBoldMathCommand{\bbeta}{\beta}
\DeclareBoldMathCommand{\bkappa}{\kappa}
\DeclareBoldMathCommand{\blambda}{\lambda}
\def\vol{\ensuremath{\mathrm{Vol}}}
\def\F {\ensuremath{\bm{F}}}
\def\M {\ensuremath{\bm M}}
\def\frkc{\ensuremath{\mathfrak{c}}}
\def\frkd{\ensuremath{\mathfrak{d}}}
\def\frke{\ensuremath{\mathfrak{e}}}
\def\frkf{\ensuremath{\mathfrak{f}}}
\def\frkm{\ensuremath{\mathfrak{m}}}
\def\frkr{\ensuremath{\mathfrak{r}}}
\def\frkA{\ensuremath{\mathfrak{A}}}
\def\frkM{\ensuremath{\mathfrak{M}}}
\def\frkB{\ensuremath{\mathfrak{B}}}
\def\frkt{\ensuremath{\mathfrak{t}}}
\DeclareBoldMathCommand{\bA}{A}
\DeclareBoldMathCommand{\bB}{B}
\DeclareBoldMathCommand{\bC}{C}
\DeclareBoldMathCommand{\bW}{W}
\DeclareBoldMathCommand{\bH}{H}
\def\scrR{\ensuremath{\mathscr{R}}}
\def\vol{\ensuremath{\mathrm{vol}}}
\def\wdeg{\ensuremath{\mathrm{wdeg}}}
\def\mycomment#1{}
\title{Homotopy techniques for solving sparse column support determinantal polynomial systems}
\author{George Labahn \thanks{David R. Cheriton School of Computer Science,
        University of Waterloo, Waterloo ON, Canada N2L 3G1, emails:{\text{\{glabahn, eschost, txvu\}@uwaterloo.ca}}}, 
        Mohab Safey El Din \thanks{Sorbonne Universit\'e, CNRS, Laboratoire
          d'Informatique de Paris 6 (LIP6, UMR7606), \'Equipe POLSYS, 4 place Jussieu, F-75252, Paris Cedex 05, France, email:{\text{Mohab.Safey@lip6.fr}}}, 
        \'Eric Schost$^*,$  Thi Xuan Vu$^{\dagger *}$
}
\date{}
\def\elim{\mathfrak{w}}
\newtheorem{definition}{Definition}
\numberwithin{definition}{section}
\newtheorem{theorem}[definition]{Theorem}
\newtheorem{proposition}[definition]{Proposition}
\newtheorem{lemma}[definition]{Lemma}
\begin{document}

\maketitle 

\begin{abstract}
Let $\KK$ be a field of characteristic zero with $\KKbar$ its
algebraic closure. Given a sequence of polynomials $\g = (g_1, \ldots,
g_s) \in \KK[x_1, \ldots , x_n]^s$ and a polynomial matrix $\F =
[f_{i,j}] \in \KK[x_1, \ldots, x_n]^{p \times q}$, with $p \leq q$, we
are interested in determining the isolated points of $V_p(\F,\g)$, the
algebraic set of points in $\KKbar$ at which all polynomials in $\g$
and all $p$-minors of $\F$ vanish, under  the assumption
$n = q - p + s + 1$. Such polynomial systems arise in a
variety of applications including for example polynomial optimization
and computational geometry.

We design a randomized sparse homotopy algorithm for computing the isolated
points in $V_p(\F,\g)$ which takes advantage of the determinantal structure of
the system defining $V_p(\F,\g)$. Its complexity is polynomial in the maximum
number of isolated solutions to such systems sharing the same sparsity pattern
and in some combinatorial quantities attached to the structure of such systems.
It is the first algorithm which takes advantage both on the determinantal
structure and sparsity of input polynomials.

We also derive complexity bounds for the particular but important case where
$\g$ and the columns of $\F$ satisfy weighted degree constraints. Such systems
arise naturally in the  computation of critical points of maps restricted to algebraic 
sets when both are invariant by the action of the symmetric group.
\end{abstract}

\section{Introduction}

Let $\g= (g_1, \dots, g_s)$ be a sequence of polynomials in $\KK[x_1,
  \dots, x_n]^s$, and let $\F = [f_{i, j}]$ be a polynomial matrix in
$\KK[x_1, \dots, x_n]^{p \times q}$, where $\KK$ is a field of
characteristic zero with algebraic closure $\KKbar$.
Assuming  $p \leq q$, we are interested in describing the set 
\begin{equation}\label{eq:one}
  V_p(\F, \g) = \{\x \in \KKbar^n \, | \, \rank(\F(\x)) < p \, \mbox{ and } \, 
  g_1(\x) = \cdots = g_s(\x) = 0\}. 
\end{equation}
If for any positive integer $r$ we let $M_r(\F)$ be the set of all
$r$-minors of $\F$ then our set of points is given by
$$
V(\langle M_p(\F) \rangle + \langle g_1, \dots, g_s\rangle).
$$

As an example, when $\F$ denotes the Jacobian of $(g_1, \ldots, g_s,
\phi)$ with respect to the variables $x_1, \ldots, x_n$, for some $\phi \in
\KK[x_1, \dots, x_n]$, then $V_{s+1}(\F, \g)$ is the set of critical
points of $\phi$ over the algebraic set $V(\g)$, assuming $\g$ is a
reduced regular sequence and $V(\g)$ is smooth. The problem of
computing such points appear in many areas such as polynomial
optimization and real algebraic geometry. Note that in this example we
have $n = q-p+s+1$ (since $\F$ has dimensions $p=s+1$ and $q=n$); we
will assume that this holds throughout this paper.

We wish to describe the {\em isolated} zeros of our algebraic set $V_p(\F,
\g)$ when all entries of $\F$ and $\g$ are {\em sparse
  polynomials}. We also want to take advantage of the special
determinantal structure of our algebraic set to obtain complexity
results which are polynomial in the {\em generic} number of solutions
in $\KKbar^n$ of such systems (this is the number of solutions
obtained when the coefficients of terms appearing in the entries of
$\F, \g$ are algebraically independent indeterminates) and some
combinatorial data attached to the monomial structure of the entries.

In order to achieve this, we make use of the technique of {\it symbolic homotopy
  continuation} and show how it can be used to obtain a solver with such a good
complexity. Homotopy continuation has become a foundational tool for numerical
algorithms while the use of symbolic homotopy continuation algorithms is more
recent. Such algorithms first appeared in~\cite{Bomp04, Heintz00} without any
structure on the system. Later symbolic homotopies were used in square sparse
systems~\cite{Jer09, HeJeSa10, Maria13, HeJeSa14} and multi-homogeneous
systems~\cite{SaSc18, HeJeSaSo02, HSSV18}.
 
Homotopy continuation involves defining a deformation between our
system defining $V_p(\F,\g)$ and a second system defining $V_p({\M}, {\r})$ which is
similar but whose solutions are easy to describe. Formally, we let $t$
be a new variable and construct a matrix
 \begin{equation}\label{eqdef:V}
   {\bm V} =  (1-t) \cdot \M + t \cdot \F \in \KK[t, x_1, \ldots ,
   x_n]^{p\times q} 
 \end{equation}
  which connects a {\it start matrix} $\M \in \KK[x_1, \ldots ,
  x_n]^{p\times q}$ to our target matrix $\F$, together with
  polynomials $\u  = (u_1, \dots, u_s)$  of the form
 \begin{equation}\label{eqdef:u}
   \u =  (1-t) \cdot \r + t \cdot \g \in \KK[t, x_1, \ldots ,
   x_n]^s, 
 \end{equation}
 that connects a starting polynomial system $\r$ to our target system
 $\g$.  Such a homotopy allows us to define a {\em homotopy curve},
 steering the solutions of the start system to the isolated solutions
 to our input system (we do not assume that our input system has 
 finitely many solutions).

 We will use a data-structure known as {\em
   zero-dimensional parametrization} to represent finite algebraic
 sets.  If $V$ is such a set, defined by polynomials over $\KK$, a
 zero-dimensional parametrization $\scrR = ((\elim, v_1, \dots, v_n),
 \Lambda)$ of $V$ consists of
 \begin{itemize}
 \item[(i)] a square-free polynomial $\elim$ in $\KK[y]$, where $y$ is a
   new indeterminate,
 \item[(ii)] polynomials $(v_1, \dots, v_n)$ in $\KK[y]$ with each
   $\deg(v_i) < \deg(\elim)$ and satisfying
   \begin{center}
     $V = \{(v_1(\tau), \dots, v_n(\tau)) \in \KKbar^{n} \, | \, \elim(\tau) =
     0\}$,\end{center}
 \item[(iii)] a linear form $\Lambda = \lambda_1\, x_1 + \cdots +
   \lambda_n \, x_n$ with coefficients in $\KK$, such that $\lambda_1
   v_1 + \cdots + \lambda_n v_n~=y$ (so the roots of $\elim$ are the
   values taken by $\Lambda$ on $V$).
 \end{itemize}
When this holds, we write $V=Z(\scrR)$. This representation was
introduced in early work of Kronecker and
Macaulay~\cite{Kronecker82,Macaulay16} and has been widely used as a
data structure in computer algebra, see for instance~\cite{GiMo89,
  ABRW96, GiHeMoMoPa98,GiHeMoPa95, Rouillier99,GiLeSa01}.

Then, given a zero-dimensional parametrization $\scrR_0$ of $V_p(\M,
\r)$, we will apply the algorithm in~\cite{HSSV18} to the system
$(M_p({\bm V}), \u)$ to lift $\scrR_0$ to a zero-dimensional
parametrization $\scrR_1$ of the isolated zeros of $V_p(\F, \g)$.
At a high level the strategy for using homotopy methods to determine
isolated zeros is relatively simple to describe, but also difficult to
realize. The start system should have at least the same number of
solutions as the target system and should be `easy' to solve. Also, we
want a {\em sparse} homotopy algorithm, that is, we also wish to have
a complexity which depends on the support of the polynomials appearing
in our target system. 

The main contribution in this paper is to provide the needed
ingredients for a sparse homotopy algorithm for our determinantal
systems which makes use of the {\em column support} of $\F$. We
determine a family of possible start systems, and we show that a
generic member of this family allows us to carry out the procedure
successfully; we also show how to compute the solutions of this start
system. Our runtime is polynomial in the degree of the start system
and the degree of the homotopy curve, both depending on certain mixed
volumes related to the polynomials $\g$ and the columns of $\F$, see
Theorem~\ref{thm:sparsedeterminantal}.  As far as we are aware, this
is the first homotopy algorithm which simultaneously exploits both
determinantal structure and sparsity.

The tools used to create our sparse column support homotopy also allow
us to build a column homotopy algorithm for determinantal systems for
weighted degree polynomials. These are important when all our input
polynomials (including those in the input matrix) are invariant under
the action of the group of permutations on $n$ letters. In that case,
one can perform an algebraic change of coordinates to express all
entries with respect to elementary symmetric functions which are
naturally weighted (the $k$-th elementary symmetric function then has
weighted degree $k$). We show that one obtains a speed-up which is
polynomial in the product of the weights, see Theorem~\ref{thm:homotopy}.

This is not the first time that determinantal structures have been exploited to
speed-up polynomial system solvers. Previous work includes, for example,
\cite{HSSV18}, which is also based on homotopy techniques: we borrow some
results and techniques from that reference, but our discussion of the ``sparse''
aspects is new. Note also that one can encode rank deficiencies in a polynomial
matrix using extra variables (sometimes called Lagrange multipliers in the
context of polynomial optimization) to encode that the kernel of the considered
matrix is non-trivial. This would lead to Lagrange systems with a sparse
structure, which could be solved using homotopy techniques from~\cite{Jer09,
  HeJeSa10, Maria13, HeJeSa14}. However, this technique does not work when
isolated solutions to our determinantal system lead to rank deficiencies higher
than one: such isolated points of our determinantal system do not correspond to
isolated points of the Lagrange system. Still, we will see that such systems
play an important role to prove intermediate results needed to achieve our
results.

The use of geometric resolution algorithms is investigated in the series of
works \cite{BGHLM, BGHP, BGHP05, SP16} (and references therein). In this latter
setting, relating the complexity parameters (which are mainly geometric degrees
of some algebraic sets defined by the input) with the sparsity of these inputs
is still a non-trivial problem. Determinantal systems in the context of
Gr\"obner bases are also considered in \cite{FSS10, FSP12, Sp14}. Again, this
series of works do not take into account the sparsity of the entries.

The structure of the paper is as follows. Section~\ref{sec:pre} gives
some of the preliminary background on sparse polynomials; it is
followed by Section~\ref{sec:homotopy} which introduces the template
of a homotopy algorithm and states properties that will guarantee it
succeeds; at this stage, we do not specify how to choose the start
system. In Section~\ref{sec:colum_homotopy}, we introduce a family of
start systems and prove that a generic member of this family satisfies
the properties needed for our symbolic homotopy algorithm. The cost of
our algorithm is analyzed in Section~\ref{sec:complexity}, first
in the general case of sparse polynomials, then in the important case
of weighted domains. An example illustrating the steps of our homotopy
algorithm is given in Section \ref{sec:ex}. The paper ends with a
conclusion and topics for future research.

\section{Preliminaries}
\label{sec:pre}

\paragraph{Sparse polynomials.}
Consider a set $\bX = (x_1, \dots, x_n)$ of
indeterminates. Polynomials in $\bX$ are represented in the form of
finite sums $f = \sum_{\balpha =(\alpha_1,\dots,\alpha_n) \in \calA}
c_{\balpha}x_1^{\alpha_1} \cdots x_n^{\alpha_n}$, with $\calA$ being a finite
subset of $\N^n$, the set
$\{\balpha \in \N^n \, : \, c_{\balpha} \ne 0\}\subset \calA$ being the
\textit{support} $\supp(f)$ of $f$.  The \textit{Newton polytope} of
$f$, denoted by $\conv(f)$, is the convex hull of the support of $f$
in $\R^n$.

We will often work in the following setup. Consider $\ell$ finite sets
$\calA_1, \dots, \calA_\ell$ in $\N^n$, with $k_i$ denoting the
cardinality of $\calA_i$ for all $i$. For each $i$, we let $\calM_i = (m_{i,
  1}, \dots, m_{i, k_i})$ be the corresponding set of monomials
in $x_1, \ldots, x_n$. This allows us to define the ``generic
polynomials'' $\frkf_1, \ldots, \frkf_\ell$ supported on $\calA_1,
\dots, \calA_\ell$ by
\[
\frkf_i = \sum_{j=1}^{k_i} \, \frkc_{i, j} m_{i, j} \in
\KK[\frak C][x_1, \ldots, x_n], 
\]
where $\frak C=(\frkc_{i, j})_{1 \le i \le \ell, 1 \le j \le k_i}$
are new indeterminates. The total number of indeterminates $\frak C$ is
$N = \sum_{i=1}^\ell  k_i$. 

Identifying $\KKbar{}^{N}$ with $\KKbar{}^{k_1} \times \cdots
\times \KKbar{}^{k_\ell}$, we can view any element $\rho \in
\KKbar{}^{N}$ as a vector of coefficients, first for $\frkf_1$, then
for $\frkf_2$, etc. Then, for such a $\rho$, we will denote by
$\Theta_\rho$ the mapping
\begin{eqnarray*}
\KK[\frak C][x_1,\dots,x_n]&\to&\KKbar[x_1,\dots,x_n]\\
\sum_{\balpha \in \N^n}  \frkc_{i,j} x_1^{\alpha_1} \cdots x_n^{\alpha_n} &\mapsto& 
\sum_{\balpha \in \N^n}  \rho_{i,j} x_1^{\alpha_1} \cdots x_n^{\alpha_n}.
\end{eqnarray*}
The notation carries over to vectors or matrices of polynomials.  In
this paragraph, we discuss some properties of the zeros of systems
$\Theta_\rho(\frkf_1,\dots,\frkf_\ell)$.

For the first proposition, $\ell$ is arbitrary, but we impose a
restriction on the sets $\calA_i$.
\begin{proposition} \label{prop:generic_single} 
  Suppose that for $i=1,\dots,\ell$, $\calA_i$ contains the origin
  $\bm 0 \in \N^n$.  Then there exists a non-empty Zariski open set
  $\OpenO \subset \KKbar{}^N$ such that for $\rho \in \OpenO$, we
  have the following:
  \begin{itemize}
  \item[(i)] if $\ell \le n$, $\Theta_\rho(\frkf_1, \dots,
    \frkf_\ell)$ generates a radical ideal, whose zero-set in
    $\KKbar{}^n$ is either empty or smooth and $(n-\ell)$-equidimensional;
\smallskip
  \item[(ii)] if $\ell > n$, the zero-set of $\Theta_\rho(\frkf_1, \dots,
    \frkf_\ell)$ in $\KKbar{}^n$ is empty.
  \end{itemize}
\end{proposition}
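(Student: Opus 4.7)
My plan is to introduce the incidence variety
\[
W = \bigl\{(\rho, x) \in \KKbar^N \times \KKbar^n : \Theta_\rho(\frkf_i)(x) = 0 \text{ for } i=1,\dots,\ell\bigr\},
\]
with projections $\pi_1 \colon W \to \KKbar^n$ and $\pi_2 \colon W \to \KKbar^N$, and apply generic smoothness (valid in characteristic zero) to $\pi_2$. The first step is to establish that $W$ is smooth and irreducible of dimension $N + n - \ell$. The hypothesis that $\bm 0 \in \calA_i$ for every $i$ means each $\frkf_i$ carries its own constant coefficient $\frkc_{i,0}$; since the coefficient variables of distinct $\frkf_i$'s are disjoint, the $\ell$ defining equations of $W$ are linear forms in $\rho$ (with $x$-dependent coefficients) whose $\ell \times \ell$ submatrix indexed by $\frkc_{1,0}, \dots, \frkc_{\ell,0}$ is the identity everywhere. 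Consequently the Jacobian of the defining system in the $\rho$-variables has rank $\ell$ at every point of $W$, which yields smoothness of codimension $\ell$. The projection $\pi_1$ is moreover surjective with all fibers affine subspaces of $\KKbar^N$ of the common dimension $N - \ell$ (again because the $\frkc_{i,0}$ can be freely adjusted), so $W$ is irreducible of dimension $N + n - \ell$.

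The next step is to analyze $\pi_2$ case by case. When $\ell > n$, one has $\dim W < N$, so $\pi_2$ cannot be dominant, and I take $\OpenO$ to be the complement of the Zariski closure $\overline{\pi_2(W)}$; then $\pi_2^{-1}(\rho) = \emptyset$ for $\rho \in \OpenO$, which is the conclusion of (ii). For (i) with $\ell \le n$, if $\pi_2$ is not dominant, the same choice $\OpenO = \KKbar^N \setminus \overline{\pi_2(W)}$ makes the fiber empty and the ideal $(\Theta_\rho(\frkf_1),\dots,\Theta_\rho(\frkf_\ell))$ is the unit ideal, which is radical. Otherwise $\pi_2$ is dominant, and the generic smoothness theorem for morphisms of smooth varieties in characteristic zero (e.g.\ Hartshorne III.10.7) yields a non-empty Zariski open $\OpenO \subset \KKbar^N$ over which $\pi_2$ restricts to a smooth morphism; for $\rho \in \OpenO$ the scheme-theoretic fiber $\pi_2^{-1}(\rho)$ is smooth of dimension $\dim W - N = n - \ell$.

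To conclude, I would translate scheme-theoretic smoothness of the fiber into the statement of the proposition: a smooth scheme is reduced, which gives radicality of the ideal, and its underlying variety is smooth and equidimensional of dimension $n - \ell$. The main technical point is to make sure that $W$ itself is smooth at every point rather than only generically, so that the generic smoothness theorem applies as stated. This is precisely where the assumption that the origin belongs to each $\calA_i$ enters crucially --- it produces the identity-block structure in the $\rho$-Jacobian that makes $W$ globally smooth; everything else is routine bookkeeping.
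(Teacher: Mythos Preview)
Your argument is correct and rests on the same key observation as the paper: because each $\calA_i$ contains the origin, the partial derivatives of the defining equations with respect to the constant coefficients $\frkc_{1,0},\dots,\frkc_{\ell,0}$ produce an $\ell\times\ell$ identity block, so the total incidence variety (equivalently, the map $\Phi:(\x,\rho)\mapsto\Theta_\rho(\frkf_1,\dots,\frkf_\ell)(\x)$) is everywhere submersive. The difference is only in the black box you invoke afterwards. The paper applies an algebraic form of Thom's weak transversality theorem directly to $\Phi$: since $\mathbf 0$ is a regular value of $\Phi$, it is a regular value of $\Phi_\rho$ for generic $\rho$, and then the Jacobian criterion and a rank count finish both (i) and (ii). You instead pass to the incidence variety $W=\Phi^{-1}(0)$, establish that it is smooth \emph{and irreducible} (the extra step, which you handle via the affine-bundle structure of $\pi_1$), and then apply generic smoothness to $\pi_2$; for (ii) you use a dimension count rather than the paper's rank obstruction. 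Both routes are standard and essentially equivalent; the paper's is marginally shorter because Thom transversality does not require irreducibility of the source, whereas Hartshorne~III.10.7 as stated does.
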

\begin{proof}
  Without loss of generality, assume that $m_{i, k_i} = 1$ holds since we assume
  that $\calA_i$ contains the origin $\bm 0 \in \N^n$ for all $1\leq i\leq
  \ell$. Consider the mapping
  \[
  \Phi: (\x, \rho)\in \KKbar{}^n \times \KKbar{}^{N}\mapsto
  \Theta_{\rho}({\frak f}_1, \ldots, 
  {\frak f}_\ell)(\x). 
  \]
  We first claim that $\mathbf{0}$ is a regular value of $\Phi$, that
  is, the Jacobian matrix of this sequence of polynomials has full
  rank at all points $(\x, \rho)$ of its zero-set.  Indeed, since
  $m_{i, k_i} = 1,$ the columns corresponding to partial
  derivatives with respect to $\frak{C}$ contain an
  $\ell\times\ell$ identity matrix.
  
  As a result, by Thom's weak transversality theorem (see the
  algebraic version in e.g.~\cite{SafeySchost17}), there exists a
  non-empty Zariski open set $\OpenO \subset \KKbar{}^{N}$ such
  that for $\rho$ in $\OpenO$, $\bf 0$ is a regular value of the
induced mapping
  \[
  \Phi_{\rho} : \x \in \KKbar{}^n \mapsto \Theta_{\rho}({\frak f}_1,
  \ldots, {\frak f}_\ell)(\x).
  \]
  In other words, the Jacobian matrix of $\Theta_{\rho}({\frak f}_1,
  \ldots, {\frak f}_\ell)$ has rank $\ell$ at any zero $\x\in
  \KKbar{}^n$ of $\Theta_{\rho}({\frak f}_1, \ldots, {\frak
    f}_\ell)$. For $\ell \le n$, by the Jacobian criterion
  \cite[Theorem 16.19]{Eisenbud95}, the ideal $\langle
  \Theta_{\rho}({\frak f}_1, \ldots, {\frak f}_\ell) \rangle$ is
  therefore radical, and its zero-set is either empty or smooth and
  $(n-\ell)$-equidimensional. For $\ell > n$, this means that this set
  is empty (since the matrix above has $n$ columns, it cannot
  have rank $\ell$).
\end{proof}

For the next properties, we take $\ell=n$. In what follows,
$\calC_1,\dots,\calC_n$ are the convex hulls of $\calA_1,\dots,\calA_n$,
respectively, with the Euclidean volume of $\calC_i$ in $\R^n$ being denoted by
${\rm vol}_{\R^n}(\calC_i)$. Consider the function
\[
\varphi \, : \, (\lambda_1, \dots, \lambda_n) \mapsto
\vol_{\R^n}(\lambda_1\calC_1 + \cdots + \lambda_n\calC_n),
\]  
where
$$\lambda_1\calC_1 + \cdots + \lambda_n\calC_n = \{\x \in \R^n \, : \,
\x = \sum_{i=1}^n \lambda_i\, x_i \, {\rm with} \, x_i \in \calC_i
\}$$ is the Minkowski sum of polytopes.  The function $\varphi$ is a
homogeneous polynomial function of degree $n$ in $\lambda_i$ (see
e.g.\ \cite[Proposition~4.9]{Cox2006}). The \textit{mixed volume}
${\sf MV}(\calC_1, \dots, \calC_n)$ is then defined as the coefficient
of the monomial $\lambda_1 \cdots \lambda_n$ in $\varphi$. Then, the
Bernstein-Khovanskii-Kushnirenko (BKK) theorem~\cite{Bernstein75}
gives a bound on the number of isolated zeros of
$\Theta_\rho(\frkf_1,\dots,\frkf_n)$ in the torus in terms of this
quantity (note that here, we do not assume that the supports $\calA_i$
contain the origin).

\begin{proposition}\label{theorem:bers75-torus} 
  For any $\rho$ in $\KKbar{}^N$, the number of isolated zeros of
  $\Theta_\rho(\frkf_1,\dots,\frkf_n)$ in
  $(\KKbar-\{0\})^n$ is at most ${\sf MV}(\calC_1, \dots, \calC_n)$. 
  Furthermore, there exists a non-empty Zariski-open set $\OpenBKK
  \subset \KKbar{}^N$ such that the bound is tight for $\rho$ in
  $\OpenBKK$.
\end{proposition}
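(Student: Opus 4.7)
The plan is to deduce both assertions from the classical Bernstein--Khovanskii--Kushnirenko theorem of~\cite{Bernstein75}. For the upper bound, given $\rho \in \KKbar{}^N$ and writing $f_i := \Theta_\rho(\frkf_i)$, one has $\supp(f_i) \subseteq \calA_i$ and thus $\conv(f_i) \subseteq \calC_i$. Bernstein's theorem, in its upper-bound form, then states that the number of isolated common zeros of $(f_1,\dots,f_n)$ in $(\KKbar \setminus \{0\})^n$ is at most ${\sf MV}(\conv(f_1), \dots, \conv(f_n))$, which is bounded by ${\sf MV}(\calC_1,\dots,\calC_n)$ by monotonicity of the mixed volume. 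This gives the first assertion.

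For the second assertion, I would build $\OpenBKK$ explicitly from Bernstein's sharpness criterion: equality holds precisely when, for every nonzero $w \in \R^n$, the facial subsystem $(\frkf_1^w,\dots,\frkf_n^w)$ — obtained by restricting each $\frkf_i$ to its face $\calA_i^w := \{\balpha \in \calA_i : \langle w, \balpha\rangle \text{ is maximal}\}$ — has no common zero in $(\KKbar \setminus \{0\})^n$. Since the supports $\calA_i$ are finite, only finitely many distinct tuples $(\calA_1^w,\dots,\calA_n^w)$ arise as $w$ runs over $\R^n \setminus \{0\}$ (one per cone of the common refinement of the inner normal fans of the $\calC_i$). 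For each such tuple, the condition ``no common zero in the torus'' is Zariski-open in the coefficients supported on the corresponding faces; pulling back to $\KKbar^N$, I obtain finitely many Zariski-open sets $\OpenO_w \subset \KKbar^N$, and I define
\[
\OpenBKK \,:=\, \bigcap_{w} \OpenO_w,
\]
which is Zariski-open by construction.

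The main obstacle is showing that $\OpenBKK$ is non-empty, i.e.\ each $\OpenO_w$ is non-empty and hence dense (as $\KKbar{}^N$ is irreducible). To this end, fix a nonzero $w$ and observe that every face polytope $\calC_i^w$ lies in a common proper affine subspace of $\R^n$ of dimension at most $n-1$. After a suitable invertible monomial change of coordinates on $(\KKbar \setminus \{0\})^n$, one may assume the $\calC_i^w$ all lie in the hyperplane $x_n = \text{const}$; the facial subsystem then becomes a system of $n$ Laurent polynomials in the $n-1$ variables $x_1,\dots,x_{n-1}$, multiplied by a fixed monomial in $x_n$. Since it is overdetermined, a generic choice of its coefficients leaves no torus root; this can be made precise either by induction on $n$ using the BKK upper bound itself (the ambient dimension drops) or by a direct application of the transversality argument already used in Proposition~\ref{prop:generic_single}. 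This yields a non-empty open $\OpenO_w$, and intersecting over the finitely many facial directions produces the desired non-empty Zariski-open set $\OpenBKK$ on which the bound is attained.
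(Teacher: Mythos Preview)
The paper does not actually supply its own proof of this proposition: it is stated as the Bernstein--Khovanskii--Kushnirenko theorem and attributed to~\cite{Bernstein75}, with no \texttt{proof} environment following it. So there is no ``paper's proof'' to compare against; your sketch is essentially a reconstruction of Bernstein's original argument, and as such it is a reasonable outline.

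A couple of small imprecisions are worth flagging. First, the claim that ``every face polytope $\calC_i^w$ lies in a common proper affine subspace'' is not literally true: each $\calC_i^w$ lies in its own hyperplane $\{\langle w,\cdot\rangle = c_i\}$, and the constants $c_i$ generally differ. What you actually need (and use) is that these hyperplanes are all parallel translates of $w^\perp$; after the monomial change of coordinates, each facial polynomial becomes a monomial in $x_n$ (with exponent depending on $i$, not a single ``fixed'' one) times a Laurent polynomial in $x_1,\dots,x_{n-1}$. Since monomials are units on the torus, the conclusion---an overdetermined system of $n$ equations in $n-1$ torus variables---still stands. Second, invoking Proposition~\ref{prop:generic_single} directly for the non-emptiness step is a bit delicate, since that proposition assumes the origin lies in each support, which need not hold for the facial supports. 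The cleanest fix is the one you also mention: argue by the incidence-variety/transversality method directly, using that on the torus every monomial evaluates to a nonzero scalar, so for any fixed torus point the evaluation of each facial polynomial is a \emph{nonzero} linear form in its coefficients. That makes the Jacobian of the parametrized facial map surjective at every torus point without needing a constant term.
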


A first application of Proposition~\ref{prop:generic_single} is the
following refinement of this statement (which of course requires the
assumptions of Proposition~\ref{prop:generic_single} to hold). Again,
we take $\ell=n$.
\begin{proposition}\label{theorem:bers75}
  Suppose that for $i=1,\dots,n$, $\calA_i$ contains the origin $\bm 0
  \in \N^n$. Then, there exists a non-empty Zariski-open set
  $\OpenBKK' \subset \KKbar{}^N$ such that for $\rho$ in $\OpenBKK'$,
  $\Theta_\rho(\frkf_1,\dots,\frkf_n)$ has ${\sf MV}(\calC_1, \dots,
  \calC_n)$ solutions in $\KKbar^n$.
\end{proposition}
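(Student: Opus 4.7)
The plan is to combine Proposition~\ref{theorem:bers75-torus} (giving mixed volume many roots in the torus generically) with a ``no roots at infinity on coordinate hyperplanes'' statement obtained from Proposition~\ref{prop:generic_single}. Concretely, once I can guarantee that generically every root of $\Theta_\rho(\frkf_1,\dots,\frkf_n)$ lies in $(\KKbar\setminus\{0\})^n$, intersecting with $\OpenBKK$ finishes the proof.

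First, I would fix $j\in\{1,\dots,n\}$ and restrict the system to the hyperplane $\{x_j=0\}$. Setting $x_j=0$ kills every monomial of $\frkf_i$ in which $x_j$ appears, yielding $n$ new polynomials $\frkf_1^{(j)},\dots,\frkf_n^{(j)}$ in the $n-1$ variables $x_1,\dots,\widehat{x_j},\dots,x_n$, whose supports $\calA_i^{(j)}\subset\N^{n-1}$ are the slices $\calA_i\cap\{\alpha_j=0\}$. Because $\bm 0\in\calA_i$ by hypothesis, we still have $\bm 0\in\calA_i^{(j)}$ for every $i$, and the remaining coefficients $\frkc_{i,k}$ (those indexing monomials with no $x_j$) form an independent subset of $\frak C$. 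Applying Proposition~\ref{prop:generic_single}(ii) with $\ell=n$ and the ambient dimension now equal to $n-1<n$, I obtain a non-empty Zariski open set $\widetilde U_j$ in the affine space of those free coefficients on which the restricted system has no zero in $\KKbar^{n-1}$. Pulling back along the coordinate projection $\KKbar^N\to\KKbar^{|\,\widetilde U_j|}$ (which preserves Zariski openness and non-emptiness), I obtain a non-empty Zariski open $U_j\subset\KKbar^N$ such that for $\rho\in U_j$ the system $\Theta_\rho(\frkf_1,\dots,\frkf_n)$ has no solution with $x_j=0$.

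Now set
\[
\OpenBKK' \;=\; \OpenBKK\,\cap\, U_1\,\cap\,\cdots\,\cap\, U_n \;\subset\; \KKbar^N.
\]
This is a finite intersection of non-empty Zariski open subsets of an irreducible space, hence non-empty and Zariski open. For $\rho\in\OpenBKK'$, the sets $U_1,\dots,U_n$ guarantee that every zero of $\Theta_\rho(\frkf_1,\dots,\frkf_n)$ in $\KKbar^n$ has all coordinates non-zero, i.e.\ lies in $(\KKbar\setminus\{0\})^n$; and membership in $\OpenBKK$ says the number of isolated zeros in $(\KKbar\setminus\{0\})^n$ equals ${\sf MV}(\calC_1,\dots,\calC_n)$. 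It remains to check that there are no positive-dimensional components, but this is immediate from Proposition~\ref{prop:generic_single}(i) applied with $\ell=n$ on a further open subset (which I would tacitly also intersect into $\OpenBKK'$): the zero set is either empty or $0$-equidimensional, and each of the ${\sf MV}(\calC_1,\dots,\calC_n)$ torus points is then an isolated affine zero, yielding the announced count in $\KKbar^n$.

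The main obstacle, if any, is the careful bookkeeping of coefficient spaces when restricting: one has to argue that the openness condition on the subset of $\frak C$-indeterminates appearing in $\frkf_i^{(j)}$ lifts to an open condition on all of $\KKbar^N$, and that the restricted system indeed falls under the hypotheses of Proposition~\ref{prop:generic_single}. Both points are straightforward (the former because coordinate projections are open, the latter because the origin survives under restriction), so no deeper geometric input beyond Propositions~\ref{prop:generic_single} and~\ref{theorem:bers75-torus} is required.
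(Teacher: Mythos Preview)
Your proof is correct and follows the same strategy as the paper: use Proposition~\ref{prop:generic_single}(ii) to rule out zeros on coordinate hyperplanes, then intersect with $\OpenBKK$. The paper in fact ranges over \emph{all} non-empty subsets $\j\subset\{1,\dots,n\}$ of coordinates set to zero, whereas you observe (correctly) that singletons already suffice; you also make explicit the appeal to Proposition~\ref{prop:generic_single}(i) for zero-dimensionality, which the paper leaves implicit.
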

\begin{proof}
  Consider a subset $\j=\{i_1,\dots,i_m\}$ of $\{1,\dots,n\}$, with $1
  \le m \le n$, and let $(\frkf_{\j,1},\dots,\frkf_{\j,n})$ be the
  polynomials $(\frkf_1,\dots,\frkf_n)$ where the coordinates
  $x_{i_1},\dots,x_{i_m}$ have been set to zero; they depend on a
  certain number $N_\j \le N$ of indeterminate coefficients $\rho_\j$.

  This is thus a system of $n$ equations in $n-m < n$ unknowns, and
  the support of each of these equations still contains the origin.
  Proposition~\ref{prop:generic_single} then implies that there exists
  a non-empty Zariski-open $\omega_\j \subset \KKbar{}^{N_\j}$ such
  that for $\rho_\j$ in $\omega_\j$,
  $\Theta_{\rho_\j}(\frkf_{\j,1},\dots,\frkf_{\j,n})$ has no solution
  in $\KKbar{}^{n-m}$. Let then $\Omega_\j$ be the preimage of
  $\omega_\j$ in $\KKbar{}^N$ (under the canonical projection), and
  define $\Omega$ as the intersection of all $\Omega_\j$, for
  $\j=\{i_1,\dots,i_m\}$ a subset of $\{1,\dots,n\}$. For $\rho$ in
  $\Omega$, all coordinates of all solutions of
  $\Theta_{\rho}(\frkf_1,\dots,\frkf_n)$ are non-zero. To conclude, we
  define $\OpenBKK'$ as the intersection of $\OpenBKK$ (from
  Proposition~\ref{theorem:bers75-torus}) and $\Omega$.
\end{proof}
  
\paragraph{Initial forms.}
Let $\e = (e_1, \dots, e_n)$ be non-zero in $\Q^n$ and consider a
 polynomial
$$p = \sum_{\balpha=(\alpha_1,\dots,\alpha_n) \in S} c_{\balpha} \, x_1^{\alpha_1} \cdots
x_n^{\alpha_n}$$ with support $S=\supp(p)$. The field of definition may be
our field $\KK$, or, as will also happen below, a rational function field.
Define
\[
m(\e, p) = \min(\langle \e, \balpha
  \rangle \, | \, \balpha \in S) \quad
{\rm and}
\quad  S_{\e, p} = \{\balpha \in S
  \, | \, \langle \e, \balpha \rangle = m(\e, p)\},
\]
where $\langle \ , \ \rangle$ is the usual dot-product in $\R^n$.
Thus, $S_{\e, p}$ is the intersection of $S$ with its ``support
hyperplane'' in the direction $\e$. The {\em initial form}
of $p$ with respect to $\e$  is defined as
\[
\init_\e(p) = \sum_{\balpha=(\alpha_1,\dots,\alpha_n) \in S_{\e, p}}
\,c_\balpha\, x_1^{\alpha_1} \cdots x_n^{\alpha_n}. 
\] In other words, $\init_\e(p)$ is the sum over all terms
$c_\balpha \, x_1^{\alpha_1} \cdots x_n^{\alpha_n}$ for which the dot-product
$\langle \e, \balpha \rangle $ is minimized.  For a vector $\p = (p_1,
\dots, p_n)$ of  polynomials, we let $$\init_\e(\p) =
(\init_\e(p_1), \dots, \init_\e(p_n)).$$ Even though there is an
infinite number of possible directions $\e$, the number of polynomial
systems $\{\init_\e(\p) \, | \, \e \text{~non-zero~in~} \Q^n\}$
obtained in this manner is finite, since the support of each $p_i$ has
finitely many support hyperplanes.

\section{Determinantal homotopy}
\label{sec:homotopy}
\label{subsec:hom_properties}
  
In this section, we review a few useful properties of homotopy
continuation methods for determinantal ideals. As input, we are given
$\g=(g_1,\dots,g_s)$ and $\F$ in $\KK[x_1,\dots,x_n]^{p \times q}$,
and we assume $n=q-p+s+1$. Let $t$ be a new variable and construct a
matrix
\[
{\bf V} =  (1-t) \cdot \M + t \cdot \F\in \KK[t, x_1, \ldots , x_n]^{p\times q}
\]
which connects a {\it start matrix} $\M \in \KK[x_1, \ldots ,
x_n]^{p\times q}$ to our target matrix $\F$, together with polynomials
$\u = (u_1, \dots, u_s)$ of the form
\[
\u =  (1-t) \cdot \r + t \cdot \g \in \KK[t, x_1, \ldots , x_n]^s,
\] 
which connect a starting polynomial system $\r$ to our target system
$\g$. Then, ${\bf V}$ and $\u$ define a deformation which allows us to
connect the solutions of the start system $V_p(\M,\r)$ to the
isolated solutions of our system $V_p(\F,\g)$. 

Algorithms for symbolic homotopy continuation require several
ingredients. We need a start system that can be solved efficiently and
has the ``right'' number of solutions, a description of the 
solutions of this start system, and a bound $\varrho$ that determines
the number of steps we perform. 

Proposition~\ref{pro:start_mat} below makes these requirements more
precise; it is a minor modification
of~\cite[Propositions~13~and~24]{HSSV18}.  To state it, it will be
convenient to describe our homotopy process using only vectors of
polynomials. To this end, we fix an ordering $\succ$ on the $p$-minors
of $p\times q$ matrices and set $m = s+{{q} \choose p}$.
Consider the system of equations
$$\bB= (u_1, \dots,u_s, b_{s+1},\dots,b_m) \in \KK[t, x_1, \dots,
  x_n]^m,$$ where $u_1,\dots,u_s$ are as defined above, and where the
polynomials $(b_{s+1}, \dots, b_m)$ are the $p$-minors of ${\bf V}$,
following the ordering $\succ$.  For $\tau \in \KKbar$, we write
$\bB_{t = \tau}$ for the polynomials in $\KKbar[x_1,\dots,x_n]$
obtained by the evaluation $t \mapsto \tau$ in $\bB$. In particular,
$\bB_{t = 0}$ is the set of equations in our start system, and
$\bB_{t = 1}$ are the equations we want to solve.

Consider the ideal $J$ generated by $\bB$ in $\KK(t)[x_1,\dots,x_n]$.
The roots of $J$ have coordinates in an algebraic closure of $\KK(t)$,
so we can view them in $\KKbar\langle \langle t \rangle \rangle^n$,
where $\KKbar\langle \langle t \rangle \rangle$ is the field of
Puiseux series with coefficients in $\KKbar$. Thus, these solutions
are meant to describe the local behaviour of the solutions of $\bB$ at
$t=0$. A vector $\balpha$ in $\KKbar\langle \langle t \rangle
\rangle^n$ admits a {\em valuation} $\nu(\balpha)$, defined as the
minimum of the valuations (with respect to $t$) of its coordinates, and we
say that $\balpha$ is {\em bounded} when $\nu(\balpha) \geq 0$.
This will be one of the conditions we impose on the solutions of $J$.

The algorithm is in essence a form of Newton iteration with respect to
$t$. One input needed for the algorithm is an upper bound $\varrho$ on
the precision in $t$ at which we need to do the computations. A
sufficient upper bound for $\varrho$ is the degree of the {\em
  homotopy curve}, which is the union of all dimension-1 irreducible
components of $V(\bB) \subset \KKbar{}^{n+1}$ whose projections on the
$t$-space are Zariski dense. In effect, this is the number of isolated
solutions of the system in $\KK[t,x_1,\dots,x_n]$ obtained by taking
all equations in $\bB$, together with a linear form in
$t,x_1,\dots,x_n$ with random coefficients.

Finally, as in~\cite{HSSV18}, the following proposition assumes that we are
given a {\em straight-line program} $\Gamma$ that computes the
polynomials $\bB$, that is, is a sequence of operations $+,-,\times$
that takes as input $t,x_1,\dots,x_n$ and evaluates $\bB$. Its {\em
  length} is simply the number of operations it performs.

\begin{proposition}\label{pro:start_mat} 
  Suppose that the following conditions hold:
  \begin{itemize}
  \item[$(i)$] the ideal generated by $\bB_{t =0}$ is radical and
    of dimension zero in $\KK[x_1,\dots,x_n]$, with $\chi$ solutions;
  \item[$(ii)$] all points in $V(\bB) \subset \KKbar\langle \langle t
    \rangle \rangle^n$ are bounded.
  \end{itemize}
  Then, the ideal $J$ generated by $\bB$ in $\KK(t)[x_1,\dots,x_n]$ is
  radical and of dimension zero, with $\chi$ solutions, and the system
  $\bB_{t =1}$ admits at most $\chi$ isolated solutions (counted with
  multiplicities).
  
  Furthermore, given a zero-dimensional parametrization of the solutions
  of $\bB_{t=0}$, a straight-line program $\Gamma$ of length $\beta$
  that computes $\bB$, and the upper bound $\varrho$ as above, there
  exists a randomized algorithm ${\sf Homotopy}$ which computes a
  zero-dimensional parametrization of the isolated solutions of $\bB_{t
    =1}$ using
  \begin{equation}\label{eq:aaa}
    \softO(\chi(\varrho + \chi^5) n^4 \beta)
  \end{equation}
  operations in $\KK$. 
\end{proposition}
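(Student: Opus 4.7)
My plan is to derive this proposition as a minor adaptation of \cite[Propositions~13 and 24]{HSSV18}. In that reference $\bB$ is assumed to consist purely of the $p$-minors of a parametrized matrix; here we additionally prepend the explicit polynomials $u_1,\dots,u_s$ and package them with the minors into one vector of length $m$ computed by a single straight-line program $\Gamma$ of length $\beta$. Since the earlier analysis treats $\bB$ as a black-box system evaluated by an SLP, only the structural hypotheses need rechecking for the mixed system; the complexity bookkeeping then transfers unchanged.

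For the structural half, I would analyze the projection $\pi\colon V(\bB)\subset\KKbar^{n+1}\to\KKbar$ onto the $t$-axis. By hypothesis (i) the scheme-theoretic fiber $\pi^{-1}(0)=V(\bB_{t=0})$ is reduced of length $\chi$, so every irreducible component of $V(\bB)$ meeting this fiber has dimension exactly one and projects dominantly onto the $t$-axis. Hypothesis (ii) says that every $\KKbar\langle\langle t\rangle\rangle$-point of $V(\bB)$ has non-negative valuation, hence specializes at $t=0$ to a $\KKbar$-point of the fiber; combined with the radicality from (i), each of the $\chi$ simple fiber points is the limit of a unique bounded Puiseux branch, and conversely each such branch meets the fiber in a unique simple point. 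This bijection forces $V(J)$ to consist of exactly $\chi$ reduced points over $\overline{\KK(t)}$, so $J$ is radical of dimension zero and degree $\chi$ over $\KK(t)$. The bound at $t=1$ then follows by specialization: any isolated solution of $\bB_{t=1}$ must arise as the value at $t=1$ of one of these $\chi$ branches, since otherwise it would sit on a positive-dimensional component of $V(\bB)$ contained in the hyperplane $\{t=1\}$ and so fail to be isolated in $\bB_{t=1}$; counting with multiplicity yields at most $\chi$ such points.

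For the algorithmic half I would invoke the Newton-based symbolic homotopy solver of \cite{HSSV18} essentially verbatim: $t$-adic Newton iteration lifts the input parametrization of $\bB_{t=0}$ to a parametrization of $V(J)$ modulo $t^{\varrho+1}$, rational reconstruction then recovers the true parametrization of $V(J)$ over $\KK(t)$, and finally specialization at $t=1$ followed by squarefree extraction produces the desired output. The lifting phase is essentially linear in $\varrho$ and in the SLP length $\beta$, with overhead $\chi n^3$ per step coming from evaluating the Jacobian of $\bB$ through $\Gamma$ and manipulating parametrizations of degree $\chi$, while the rational reconstruction and post-processing at $t=1$ supply the $\chi^5$ term; combining these yields the advertised bound $\softO(\chi(\varrho+\chi^5) n^4\beta)$. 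The main obstacle I anticipate is the bijection between bounded Puiseux branches and simple fiber points in the structural paragraph: one must simultaneously argue that (a) radicality from (i) rules out hidden multiplicities at $t=0$, and (b) boundedness from (ii) rules out branches of $V(J)$ escaping to infinity as $t\to 0$, so that the two finite sets are in perfect correspondence. This is the precise point where both hypotheses are essential and must be invoked together.
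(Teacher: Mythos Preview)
Your approach matches the paper's exactly: the paper does not give a self-contained proof of this proposition but simply states that it ``is a minor modification of~\cite[Propositions~13~and~24]{HSSV18}.'' Your proposal to derive it by adapting those propositions---rechecking the structural hypotheses for the mixed system $(u_1,\dots,u_s,b_{s+1},\dots,b_m)$ and carrying over the SLP-based complexity analysis verbatim---is precisely that minor modification, and in fact your sketch of the bijection between bounded Puiseux branches and simple fiber points, together with the Newton lifting/rational reconstruction outline, is more detailed than anything the paper itself supplies.
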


\section{Main algorithm}\label{sec:colum_homotopy}

Given $\g=(g_1,\dots,g_s)$ and $\F=[f_{i,j}]_{1\le i \le p,\ 1 \le j
  \le q}$ as in Section~\ref{sec:homotopy}, our goal in this section
is to specify the homotopy algorithm. We design a suitable start
system for the symbolic homotopy algorithm, and we establish that this
system satisfies the assumptions of Proposition~\ref{pro:start_mat}.
The cost analysis is done in the next section.

In order to build the polynomials $\r=(r_1,\dots,r_s)$
of~\eqref{eqdef:u}, we take polynomials with the same supports at
$\g=(g_1,\dots,g_s)$ and generic coefficients, taking care to add the
constant $1$ to their monomial supports if it is missing. The main new
ingredient is the determination of the start matrix $\M$
of~\eqref{eqdef:V}. In this paper, we focus on what we call the {\em
  column support homotopy} where the construction of $\M$ is derived
from the unions of the supports of the entries of $\F$ per
columns. This extends a similar construction given in~\cite{HSSV18}
for dense polynomials, but which was instead based on the total
degrees of the columns of $\F$.


\subsection{Column support homotopy}
\label{subsec:sparse_col}
      
For $1\leq i\leq s$, let $\calA_i \subset \N^n$ denote the support of
$g_i$, to which we add the origin ${\bf 0} \in \N^n$. For $1\leq j
\leq q$, let $\calB_j  \subset \N^n$ be the {\em union} of the supports of the
polynomials in the $j$-th column of $\F$, to which we add $\bf 0$ as
well.

For  given $i$ and $j$ we  denote by $\kappa_i$ the
cardinality of $\calA_i$ and by $\mu_j$ the cardinality of $\calB_j$,
and let $ (n_{i,
  1}, \dots, n_{i, \kappa_i})$ and $( m_{j, 1}, \dots, m_{j, \mu_j})$
denote the monomials in $x_1,\dots,x_n$ supported by $\calA_i$ and
$\calB_j$, respectively. We can then  define the ``generic'' polynomials
supported on $\calA_1,\dots,\calA_s$ and $\calB_1,\dots,\calB_q$:
\begin{equation*} \label{eq:generic_polys}
  \frkr_i = \sum_{k=1}^{\kappa_i} \, \frkd_{i, k} n_{i, k}  \ \ (1 \le i \le s) \quad
  \quad {\rm and} \quad \frkm_{j} = \sum_{k=1}^{\mu_j} 
  \, \frke_{j, k} m_{j, k} \ \ (1 \le j \le q),
\end{equation*}
where all $\frkd_{i, k}$ and $\frke_{j, k}$ are new
indeterminates. Let $\frkc_{i, j}$, for $1 \leq i \leq p$ and $1 \leq
j \leq q$, be $p q$ additional new indeterminates so that $\frkA = \{
(\frkd_{i, k})_{1 \le i \le s, 1\le k \le \kappa_i} , (\frke_{j,
  k})_{1 \le j \le q, 1 \le k \le \mu_j} , (\frkc_{i, j})_{1 \le i \le
  p, 1 \le j \le q} \}$, the set of all these new indeterminates, has
size
\[ 
N = \sum_{i=1}^s \, \kappa_i + \sum_{i=1}^q \mu_i + pq. 
\] 
We then define the matrix
\begin{equation*} \label{eq:generic_mat_column} 
  \frkM = \begin{pmatrix}
    \frkc_{1, 1}\, \frkm_1 & \frkc_{1, 2} \, \frkm_2   & \hdots &
    \frkc_{1, q}\, \frkm_q \\  
    \vdots & \vdots &      & \vdots \\   
    \frkc_{p, 1}\, \frkm_1 & \frkc_{p, 2}\, \frkm_2 & \hdots &
    \frkc_{p, q}\, \frkm_q 
  \end{pmatrix} \in \KK[\frkA][x_1,\dots,x_n]^{p \times q}.  
\end{equation*}
As before, for $\rho$ in $\KKbar{}^N$, for any polynomial $f$ having
coefficients in $\KKbar[\frkA]$, $\Theta_\rho(f)$ is the polynomial
with coefficients in $\KKbar$ obtained through evaluation of the
indeterminates $\frkA$ at $\rho$; the notation carries over to
polynomial matrices.

We will use $\frkM$ and $\frkr = (\frkr_1, \ldots, \frkr_s)$ to
construct our start system, by assigning random values to all
indeterminates in $\frkA$. Thus, we let $t$ be a new indeterminate and
we denote by $\frkB$ the polynomials in $\KK[\frkA][t,x_1,\dots,x_n]$
obtained by considering the equations $(1-t) \cdot \frkr + t \cdot \g$
and the $p$-minors of $(1-t)\cdot \frkM+t\cdot \F$. Our goal in this
section is to establish the following result.
\begin{proposition}\label{prop:hyp}
  There exists a non-empty Zariski open subset $\Open$ of
  $\KKbar{}^N$ such that for $\rho$ in $\Open$,
  $\bB:=\Theta_\rho(\frkB)$ satisfies the assumptions of
  Proposition~\ref{pro:start_mat}.
\end{proposition}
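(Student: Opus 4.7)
The plan is to produce two non-empty Zariski open subsets of $\KKbar{}^{N}$, one securing condition~(i) and one securing condition~(ii) of Proposition~\ref{pro:start_mat}, and then take $\Open$ to be their intersection. Both steps hinge on the following structural fact: because column $j$ of $\frkM$ equals $\frkm_j\cdot(\frkc_{1,j},\ldots,\frkc_{p,j})^{\top}$, the $p$-minor indexed by columns $K=\{j_1<\cdots<j_p\}$ factors as
\[
\det\bigl([\frkc_{i,j_k}]_{1\le i,k\le p}\bigr)\cdot\frkm_{j_1}\cdots\frkm_{j_p}.
\]
On the non-empty open subset of $\KKbar{}^{N}$ where every $p\times p$ subdeterminant of $[\frkc_{i,j}]$ is non-zero, the $p$-minors of $\Theta_\rho(\frkM)$ therefore vanish at $x\in\KKbar^n$ if and only if at least $q-p+1$ of the $\Theta_\rho(\frkm_j)$ vanish at $x$.

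For condition~(i), this decomposition gives $V(\bB_{t=0})=\bigcup_{|J|=q-p+1} V_J$, where $V_J$ is cut out by $\Theta_\rho(\frkr_1),\ldots,\Theta_\rho(\frkr_s)$ together with $\{\Theta_\rho(\frkm_j)\}_{j\in J}$: a square sparse system of $n=s+(q-p+1)$ equations whose supports all contain the origin. Propositions~\ref{prop:generic_single} and~\ref{theorem:bers75} then yield, on a further non-empty open subset, that every $V_J$ is smooth and finite, of cardinality ${\sf MV}(\calA_1,\ldots,\calA_s,\{\calB_j\}_{j\in J})$. An additional transversality argument applied to the subsystems obtained by also equating some $\frkm_{j'}$, $j'\notin J$, to zero ensures that the $V_J$ are pairwise disjoint, giving $\chi=\sum_{|J|=q-p+1}{\sf MV}(\calA_1,\ldots,\calA_s,\{\calB_j\}_{j\in J})$. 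Radicality then follows from the Jacobian criterion: by the product rule, at any $x\in V_J$ the only $p$-minors whose gradient is non-zero are those indexed by $K=\{k\}\cup J^c$ for some $k\in J$, and these gradients are non-zero scalar multiples of $\nabla\frkm_k(x)$. Hence the rows of the full Jacobian of $\bB_{t=0}$ at $x$ span the same subspace as those of the square system defining $V_J$, which has full rank $n$ by the previous step.

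Condition~(ii) is where I expect the main obstacle. The plan is an initial-form analysis: an unbounded Puiseux branch $\bm{\alpha}(t)\in\KKbar\langle\langle t\rangle\rangle^n$ of direction $\bm{\nu}\in\Q^n$ with some negative coordinate would, after substituting $\alpha_i=t^{\nu_i}\beta_i(t)$ and extracting leading $t$-coefficients, produce a torus zero $\bm{\beta}(0)\in(\KKbar\setminus\{0\})^n$ of the system formed by the $\init_{\bm{\nu}}(\Theta_\rho(\frkr_i))$ together with the $p$-minors of the matrix whose entries are $\frkc_{i,j}\init_{\bm{\nu}}(\frkm_j)$; the target data $\g,\F$ contributes nothing to leading order because $\bm 0\in\calA_i$ and $\bm 0\in\calB_j$ force $m(\bm{\nu},\frkr_i)\le 0$ and $m(\bm{\nu},\frkm_j)\le 0$, so the $t\cdot\g$ and $t\cdot\F$ pieces always have strictly larger $t$-valuation. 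The column-separable factorisation survives under taking initial forms, so such a torus zero would still have to satisfy, for some index subset $J$ of size $q-p+1$, the square subsystem with supports lying in the proper faces of $\calA_1,\ldots,\calA_s,\calB_1,\ldots,\calB_q$ dual to $\bm{\nu}$. Translating each face to the hyperplane $\{\bm{a}:\langle\bm{\nu},\bm{a}\rangle=0\}$ shows that these $n$ support polytopes all lie in a common $(n-1)$-dimensional affine subspace, so their mixed volume vanishes, and Proposition~\ref{theorem:bers75-torus} then rules out any isolated torus zero. Since only finitely many initial subsystems arise, one per cone of the common normal fan of the $\calA_i$ and $\calB_j$, an additional Thom-transversality argument in the spirit of Proposition~\ref{prop:generic_single} applied to each of them handles the remaining possibility of a positive-dimensional torus component for generic $\rho$. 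Intersecting all the non-empty open subsets obtained above yields the required $\Open$.
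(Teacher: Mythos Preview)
Your treatment of condition~(i) matches the paper's Proposition~\ref{prop:start_pro}: same factorisation of the $p$-minors of $\frkM$, same decomposition of $V(\bB_{t=0})$ into the square subsystems $\Theta_\rho(\frkM_J,\frkr)$, same appeal to Proposition~\ref{prop:generic_single} for finiteness, smoothness and pairwise disjointness, and an equivalent local argument for radicality. For condition~(ii) you take a genuinely different route. The paper lifts to the Lagrange system in the $n+p$ variables $(\x,\bell)$ and proves, via an induction on incidence varieties (Proposition~\ref{prop:gen3}), that for generic parameters the initial-form Lagrange system $\init_\e(\bm\eta)$ has no torus solution once any single coordinate is fixed to $1$; boundedness then follows by lifting an unbounded $\balpha$ to the Lagrange system and invoking quasi-homogeneity (Proposition~\ref{prop:boundedness}). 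You instead stay in $n$ variables, observe that the column factorisation survives passage to initial forms, reduce to showing that each square face subsystem $\{\init_{\bm\nu}(\frkr_i),\ \init_{\bm\nu}(\frkm_j):j\in J\}$ has no torus zero, and argue this from mixed-volume vanishing plus a transversality argument. Your route is more direct and avoids introducing the $\bell$-variables; the paper's Lagrange detour trades the determinantal combinatorics at the initial-form level for a square system in which each equation carries its own independent block of generic coefficients, which is what drives the clean induction in Proposition~\ref{prop:gen3}.

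Two points in your sketch need tightening. First, the reason the $t\cdot\g$ and $t\cdot\F$ pieces do not contribute to leading order is not that $\bm 0\in\calA_i,\calB_j$; it is the support inclusions $\supp(g_i)\subseteq\calA_i=\supp(\frkr_i)$ and $\supp(f_{i,j})\subseteq\calB_j=\supp(\frkm_j)$, which give $m(\bm\nu,g_i)\ge m(\bm\nu,\frkr_i)$ and $m(\bm\nu,f_{i,j})\ge m(\bm\nu,\frkm_j)$, so that the extra factor $t$ pushes the valuation strictly higher (this is exactly the mechanism in the paper's Proposition~\ref{prop:inite}). Second, your final Thom-transversality step must be carried out on the torus, since the face supports need not contain $\bm 0$; the cleanest way to conclude is that transversality forces the torus zero-set to be zero-dimensional, while $\bm\nu$-quasi-homogeneity of the face system gives any torus zero a one-parameter orbit $t\mapsto(t^{\nu_1}\beta_1,\dots,t^{\nu_n}\beta_n)$, so the set is empty. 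You should also dispose of the case where some coordinate $\alpha_i$ of the Puiseux solution is identically zero; this is routine but not addressed.
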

In other words, we will prove that, for such a choice of $\rho$, the ideal
generated by $\bB_{t=0}$ in $\KKbar[x_1, \ldots, x_n]$ is radical and
zero-dimensional (this is done in the next subsection) and that the solutions of
$\bB$ in $\KKbar\langle\langle t \rangle\rangle^n$ are bounded. This boundedness
properties is proved in Subsection~\ref{ssec:boundedness} using properties of
Lagrange type systems which are established in Subsection~\ref{ssec:lagrange}.

Note also the following consequence of Proposition~\ref{pro:start_mat}: the
number of isolated solutions of the system we want to solve (counting
multiplicities) is bounded above by the number of solutions of a generic start
system $\Theta_\rho(\frkB)_{t=0}$.


\subsection{Properties of the start system}
\label{subsec:multi}
\label{subsec:column_cost}

In this subsection, we prove that for a generic choice of $\rho$ in
$\KKbar{}^N$, if we write $\bB:=\Theta_\rho(\frkB)$ then the ideal
generated by $\bB_{t = 0}$ in $\KKbar[x_1, \ldots, x_n]$ is radical
and zero-dimensional.

\begin{proposition} \label{prop:start_pro}
  There exists a non-empty Zariski open set $\Openstart \subset
  \KKbar{}^N$ such that for $\rho$ in $\Openstart$, writing
  $\bB:=\Theta_\rho(\frkB)$, the ideal generated by $\bB_{t = 0}$ in
  $\KKbar[x_1, \ldots, x_n]$ is radical of dimension zero.
\end{proposition}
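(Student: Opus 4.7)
The plan is to exploit the column-scalar structure of $\frkM$: its $j$-th column is $(\frkc_{1,j},\ldots,\frkc_{p,j})^{T}\,\frkm_j$, so every $p$-minor on a column subset $J'$ of size $p$ factors as $\det(C_{J'})\prod_{j\in J'}\frkm_j$, where $C=[\frkc_{i,j}]_{1\le i\le p,\,1\le j\le q}$. After specializing by $\rho$ in the Zariski-open locus where all $\binom{q}{p}$ maximal minors of $C$ are nonzero, $\bB_{t=0}$ vanishes at $\x$ exactly when $\Theta_\rho(\frkr_i)(\x)=0$ for $i=1,\ldots,s$ and at least $q-p+1$ of the values $\Theta_\rho(\frkm_j)(\x)$ vanish. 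Consequently,
\[
V(\bB_{t=0})=\bigcup_{K}W_K,\qquad W_K=V\Bigl(\Theta_\rho(\frkr_1),\ldots,\Theta_\rho(\frkr_s),\bigl(\Theta_\rho(\frkm_k)\bigr)_{k\in K}\Bigr),
\]
where $K$ ranges over subsets of $\{1,\ldots,q\}$ of size $q-p+1$.

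Each $W_K$ is cut out by exactly $s+(q-p+1)=n$ polynomials, each of whose support contains $\bm 0$ by construction. Proposition~\ref{prop:generic_single}(i) then yields a non-empty Zariski-open $\omega_K\subset\KKbar{}^N$ for which $W_K$ is empty or a smooth zero-dimensional set whose defining ideal is radical. For $K\neq K'$ of size $q-p+1$, the intersection $W_K\cap W_{K'}$ is cut out by strictly more than $n$ equations whose supports still contain the origin, so part~(ii) of the same proposition supplies an open set on which $W_K\cap W_{K'}=\emptyset$. Intersecting finitely many such opens with the locus where all maximal minors of $C$ are nonzero produces the candidate $\Openstart$; on it, $V(\bB_{t=0})$ is a disjoint union of finitely many smooth points.

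The main obstacle is upgrading the pointwise smoothness of each $W_K$ to radicality of $\langle\bB_{t=0}\rangle$ itself, which requires a careful analysis of the Jacobian of the full system at a point $\x\in W_K$. The row attached to the $p$-minor indexed by $J'$ equals $\det(C_{J'})\sum_{j\in J'}\nabla\Theta_\rho(\frkm_j)(\x)\prod_{j'\in J'\setminus\{j\}}\Theta_\rho(\frkm_{j'})(\x)$, which vanishes as soon as $|J'\cap K|\ge 2$; since $|K^c|=p-1$, the only surviving rows come from the $q-p+1$ choices $J'=\{k\}\cup K^c$ with $k\in K$, and each such row is a nonzero scalar multiple of $\nabla\Theta_\rho(\frkm_k)(\x)$. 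Hence the row span of the Jacobian of $\bB_{t=0}$ at $\x$ coincides with that of the Jacobian of the $n$ equations defining $W_K$, which has rank $n$ by smoothness. The Jacobian criterion \cite[Theorem 16.19]{Eisenbud95} then gives that $\langle\bB_{t=0}\rangle$ is radical of dimension zero, completing the proof.
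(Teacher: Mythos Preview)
Your proof is correct and follows essentially the same route as the paper: the factorization of the $p$-minors of $\frkM$, the constraint that all maximal minors of $C$ be nonzero, the decomposition of $V(\bB_{t=0})$ into the sets $W_K$, and the appeals to Proposition~\ref{prop:generic_single}(i)--(ii) are identical. The only difference is in the final radicality step: the paper localizes at a point $\balpha\in W_K$ and divides the minor $\Theta_\rho(\frak C_{K^c\cup\{k\}})\prod_{j\in K^c}\Theta_\rho(\frkm_j)\cdot\Theta_\rho(\frkm_k)$ by the unit $\Theta_\rho(\frak C_{K^c\cup\{k\}})\prod_{j\in K^c}\Theta_\rho(\frkm_j)(\balpha)$ to conclude that $\langle\bB_{t=0}\rangle$ coincides locally with the radical ideal of $W_K$, whereas you compute the Jacobian of the full system directly and observe that its row span is exactly that of the defining equations of $W_K$. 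Both arguments hinge on singling out the same $n-s$ ``surviving'' minors $J'=\{k\}\cup K^c$, so this is a difference of presentation rather than of substance.
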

\begin{proof}
Note first that the equations  $\bB_{t = 0}$ that we are considering are the $p$-minors of
$\Theta_\rho(\frkM)$, together with
$\Theta_\rho(\frkr_1,\dots,\frkr_s)$. Now, any $p$-minor
of $\frkM$ has the form $\frak C_{i_1,\dots,i_p} \frkm_{i_1} \cdots
\frkm_{i_p}$, for some choice of columns $i_1,\dots,i_p$, where $\frak
C_{i_1,\dots,i_p}$ is the determinant
\begin{equation*} 
  \frak C_{i_1,\dots,i_p} = \left | \begin{matrix}
    \frkc_{1, i_1} & \frkc_{1, i_2} & \hdots &  \frkc_{1, i_p} \\  
    \vdots & \vdots &      & \vdots \\   
    \frkc_{p, i_1} & \frkc_{p, i_2} & \hdots &  \frkc_{p, i_p} 
  \end{matrix}\right | \in \KK[\frkA].
\end{equation*}
Our first constraint on $\rho$ is thus that $\Theta_\rho(\frak
C_{i_1,\dots,i_p}) \in \KKbar$ is non-zero for all
$\{i_1,\dots,i_p\}$. In this case, a point $\balpha$ in $\KKbar{}^n$
cancels all the $p$-minors of $\Theta_\rho(\frkM)$ if and only if it
cancels all products $\Theta_\rho(\frkm_{i_1}) \cdots
\Theta_\rho(\frkm_{i_p})$. This is the case if and only if there
exists $\j = \{i_1, \dots, i_{q-p+1}\} \subset \{1, \dots, q\}$ such
that $\Theta_\rho(\frkm_{i_1}),\dots,\Theta_\rho(\frkm_{i_{q-p+1}})$
all vanish at $\balpha$.  

Since we assume $n=q-p+s+1$, we can rewrite $q-p+1$ as $n-s$.  Then,
for a subset $\j=\{i_1, \dots, i_{n-s}\} \subset \{1, \dots, q\}$,
consider the polynomials $\frkM_{\j} = (\frkm_{i_1}, \dots,
\frkm_{i_{n-s}})$. By Proposition~\ref{prop:generic_single}(i), there
exists a non-empty Zariski open set $\mathscr{O}_{\j} \subset
\KKbar{}^N$ such that for $\rho$ in $\mathscr{O}_{\j}$, the ideal
generated by $\Theta_\rho(\frkM_\i, \frkr)$ is radical and admits
finitely many solutions. For subsets $\j'$ and $\j$ of $\{1, \dots,
q\}$ of cardinalities $n-s$ such that $\j \ne \j'$, the system defined
by $\frkM_{\j \cup \j'}$ and $\frkr$ contains at least $n+1$
polynomials in $\KK[\frkA][x_1, \dots, x_n]$. By using
Proposition~\ref{prop:generic_single}(ii), there exists a non-empty
Zariski open set $\mathscr{O}_{\j\cup \j'} \subset \KKbar{}^N$ such
that for $\rho$ in $\mathscr{O}_{\j\cup \j'}$, the system
$\Theta_{\rho}(\frkM_{\j \cup \j'},\frkr)$ has no solutions in
$\KKbar{}^n$.

Taking the intersection of these $\mathscr{O}_{\j}$ and
$\mathscr{O}_{\j\cup \j'}$ (which are finite in number), together with
the condition that the determinants $\Theta_\rho(\frak
C_{i_1,\dots,i_p})$ do not vanish, defines a non-empty Zariski open
$\Openstart \subset \KKbar{}^N$.  Thus, for $\rho$ in $\Openstart$, the sets
$V(\Theta_\rho(\frkM_{\j}, \frkr))$, for any subset $\j$ of $\{1,
\dots, q\}$ of cardinality $n-s$, are finite and pairwise disjoint,
and their union is $V(\bB_{t=0})$. In particular, the latter set is
finite.

Take $\rho$ in $\Openstart$ and $\balpha$ in $V(\bB_{t=0})$. We now prove
that the ideal generated by $\bB_{t = 0}$, that is, by the $p$-minors
of $\Theta_\rho(\frkM)$ and $\Theta_\rho(\frkr_1,\dots,\frkr_s)$, has
multiplicity one at $\balpha$. This will imply that $\bB_{t = 0}$
generates a radical ideal. For this, we will use the fact that
$\balpha$ is the root of the system $\Theta_\rho(\frkM_{\j}, \frkr)$,
for a unique subset $\j=(i_1,\dots,i_{n-s})$ of $\{1, \dots, q\}$ of
cardinality $n-s$, and that $ \Theta_\rho(\frkM_{\j}, \frkr) $ has
multiplicity one at $\balpha$.

Let then $\bm j=(j_1,\dots,j_{p-1})$ denote the $q - (n-s) = p-1$
columns of $\frkM$ not indexed by $\j$.  For $i$ in $\j$, the equation
$\Theta_\rho(\frak C_{j_1,\dots,j_{p-1},i} \frkm_{j_1}\cdots
\frkm_{j_{p-1}} \frkm_i)$ appears among the generators of $\bB_{t =
  0}$. In the local ring at $\balpha$, we can divide by the non-zero
quantity $\Theta_\rho(\frak C_{j_1,\dots,j_{p-1},i} \frkm_{j_1}\cdots
\frkm_{j_{p-1}})(\balpha)$.  This implies that locally at $\balpha$,
$\bB_{t = 0}$ is generated by the polynomials
$\Theta_\rho(\frkm_{i_1}),\dots,\Theta_\rho(\frkm_{i_{n-s}})$ and
$\Theta_\rho(\frkr)$. The conclusion follows.
\end{proof}


\subsection{The associated Lagrange system}\label{ssec:lagrange}

To establish the boundedness property, since $\frkB$ is
overdetermined, it will be convenient to introduce new variables
$\bell = (\ell_1, \ldots, \ell_p)$ and to work with the {\em Lagrange
system}, which consits of $s+q+1$ equations defined by
\begin{equation} \label{eq:slack_eqs_column} (1-t) \frkr + t \g\ = \ [\ell_1
    ~\cdots~ \ell_p] ((1-t) \frkM + t \F)\ = \ \frkt_1\ell_1 + \cdots + \frkt_p
  \ell_p -1 = 0,
\end{equation} 
where $\frkt = (\frkt_1, \ldots,\frkt_{p})$ are new indeterminate
coefficients. Recall that $n = q-p+s+1$, so $s+q+1 =
  n+p$; we will write these equations as $\frak
  H=(\frak H_1, \ldots, \frak H_{n+p})$.

There are now $N+p$ parameters in these equations, with elements of the
parameter space $\KKbar{}^{N+p}$  written as $\sigma=(\rho,\tau)$,
with $\rho$ in $\KKbar{}^N$ and $\tau$ in $\KKbar{}^p$. For $\sigma$
in $\KKbar{}^{N+p}$ and $f$ a polynomial with coefficients in
$\KK[\mathfrak{A}, \frkt]$, we write as usual $\Theta_{\sigma}(f)$ for
the polynomial whose coefficients are obtained from those of $f$, with
$\frak A$ evaluated at $\rho$ and $\frkt$ evaluated at $\tau$. As
before, the notation carries over to vectors or matrices of
polynomials. 

For $1\leq i \leq n+p$, $\frak H_i$ can be decomposed as $\frak H_i =
{\eta_{i}} + t {\frak h_{i}}$ with both ${\eta_{i}}$ and $\frak h_i$ in
$\KK[\mathfrak{A}, \frkt][\x, \bell]$. In particular, note that the
polynomials $\bm \eta=(\eta_1,\dots,\eta_{n+p})$ form the Lagrange
system
\begin{equation*}  \frkr_1 = \cdots = \frkr_s \ = \ [\ell_1
    ~\cdots~ \ell_p] \frkM \ = \ \frkt_1\ell_1 + \cdots + \frkt_p
  \ell_p +1 = 0
\end{equation*} 
in $\KK[\mathfrak{A}, \frkt][\x,\bell]$, so for $i=1,\dots,q$, the
polynomial $\eta_{s+i}$ is $(\frkc_{1,i} \ell_1 + \cdots + \frkc_{p,i}
\ell_p)\frkm_i$.

In what follows, we discuss properties of the polynomials
$\Theta_{\sigma}(\bm\eta)$ and their initial forms
$\init_\e(\Theta_{\sigma}(\bm\eta))$, for $\e$ in $\Q^{n+p}$.  Our
first claim is the following; the proof is straightforward.
\begin{lemma}
  For $\sigma$ in $(\KKbar{}-\{0\})^{N+p}$ and $\e$ in $\Q^{n+p}$,
  $\init_\e(\Theta_{\sigma}(\bm\eta))=\Theta_{\sigma}(\init_\e(\bm\eta))$.
\end{lemma}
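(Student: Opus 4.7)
The plan is to examine each component $\eta_i$ individually and observe that, when written as a polynomial in the variables $(\x,\bell)$ with coefficients in $\KK[\frkA,\frkt]$, every non-zero coefficient is a single monomial in the parameters. For $\eta_i = \frkr_i$ with $1 \le i \le s$, the coefficient of $n_{i,k}$ is the lone parameter $\frkd_{i,k}$; for $\eta_{s+i} = (\frkc_{1,i}\ell_1 + \cdots + \frkc_{p,i}\ell_p)\frkm_i$ with $1 \le i \le q$, the coefficient of each monomial $\ell_j\, m_{i,k}$ is the product $\frkc_{j,i}\,\frke_{i,k}$; and for $\eta_{n+p} = \frkt_1\ell_1 + \cdots + \frkt_p\ell_p + 1$ the coefficients are the individual indeterminates $\frkt_j$ together with $1$. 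So in every case the coefficient is a (possibly trivial) monomial in the parameter indeterminates $(\frkA,\frkt)$.

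Given this structural fact, the conclusion follows immediately. For $\sigma \in (\KKbar-\{0\})^{N+p}$, evaluating a non-zero monomial in the parameters at $\sigma$ yields a non-zero scalar in $\KKbar$, so the support in $(\x,\bell)$ of $\Theta_\sigma(\eta_i)$ coincides exactly with the support of $\eta_i$. Since $\init_\e$ is determined purely by the support of a polynomial --- it retains precisely those terms whose exponent vector $\balpha$ realizes the minimum of $\langle \e, \balpha\rangle$ --- the equality of supports forces $\init_\e(\Theta_\sigma(\eta_i)) = \Theta_\sigma(\init_\e(\eta_i))$ for each $i$, and the lemma follows by assembling over the components of $\bm\eta$.

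There is no genuine obstacle here: the statement is a book-keeping consequence of the fact that non-zero scalar rescalings of coefficients preserve supports, hence commute with $\init_\e$. The only care to exercise is to verify the claim above that every coefficient is a single parameter monomial (so that no cancellation is available); the hypothesis that \emph{all} coordinates of $\sigma$ are non-zero is then exactly what prevents cancellation or shrinking of the support under $\Theta_\sigma$.
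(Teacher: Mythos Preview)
Your proof is correct and is precisely the natural elaboration of what the paper leaves implicit: the paper states only that ``the proof is straightforward,'' and your argument---checking that each coefficient of $\eta_i$ in the variables $(\x,\bell)$ is a single monomial in the parameter indeterminates, so that evaluation at $\sigma\in(\KKbar-\{0\})^{N+p}$ preserves supports and hence commutes with $\init_\e$---is exactly how one justifies that claim.
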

The second proposition uses the specific shape of the equations $\frak
H$ to derive information about their roots.
\begin{proposition}\label{prop:inite}
  Let $\bm \phi = (t^{e_1}c_1+\dots, \ldots,
  t^{e_{n+p}}c_{n+p}+\dots)$ be in $\KKbar\langle\langle t
  \rangle\rangle^{n+p}$ with, for all $i=1,\dots,n+p$, $e_i$ in $\Q$
  and $c_i$ in $\KKbar-\{0\}$.

  Then for $\sigma$ in $(\KKbar{}-\{0\})^{N+p}$, we have the
  following: if $\bm \phi$ cancels $\Theta_{\sigma}(\frak H)$, then $\bm
  c=(c_1,\dots,c_{n+p})$ cancels $\Theta_{\sigma}(\init_\e(\bm\eta))$,
  with $\e = (e_1,\dots,e_{n+p})$.
\end{proposition}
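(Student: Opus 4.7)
The plan is to substitute $\bm\phi$ into each equation $\Theta_\sigma(\frak H_j) = \Theta_\sigma(\eta_j) + t\,\Theta_\sigma(\frak h_j) = 0$ and extract the coefficient of the lowest power of $t$ in the resulting Puiseux series. Since the coordinates of $\bm\phi$ have valuations $e_1,\dots,e_{n+p}$ with non-zero leading coefficients $c_1,\dots,c_{n+p}$, the standard substitution identity gives
\[
\Theta_\sigma(\eta_j)(\bm\phi) \;=\; t^{\,m(\e,\,\Theta_\sigma(\eta_j))}\;\Theta_\sigma(\init_\e(\eta_j))(\bm c) \;+\; (\text{strictly higher order in } t),
\]
using the preceding lemma to swap $\Theta_\sigma$ and $\init_\e$; similarly $t\,\Theta_\sigma(\frak h_j)(\bm\phi)$ begins at $t$-order $1 + m(\e,\Theta_\sigma(\frak h_j))$. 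The proof therefore reduces to showing that the first contribution is strictly lower order than the second, whence extracting the leading coefficient of $\Theta_\sigma(\frak H_j)(\bm\phi) = 0$ forces $\Theta_\sigma(\init_\e(\eta_j))(\bm c) = 0$ for every $j$.

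The crux is the structural support inclusion
\[
\supp(\Theta_\sigma(\frak h_j)) \;\subseteq\; \supp(\Theta_\sigma(\eta_j)), \qquad j = 1,\dots,n+p,
\]
valid for all $\sigma \in (\KKbar-\{0\})^{N+p}$. This immediately yields $m(\e, \Theta_\sigma(\eta_j)) \leq m(\e, \Theta_\sigma(\frak h_j)) < 1 + m(\e, \Theta_\sigma(\frak h_j))$, which is exactly the strict inequality needed above.

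I would verify the inclusion by splitting on the three types of Lagrange equations. For $j \in \{1,\dots,s\}$, $\eta_j = \frkr_j$, each of whose $\kappa_j$ monomials carries a distinct independent indeterminate $\frkd_{j,k}$; consequently $\supp(\Theta_\sigma(\eta_j)) = \calA_j$, while $\frak h_j = g_j - \frkr_j$ satisfies $\supp(\Theta_\sigma(\frak h_j)) \subseteq \supp(g_j)\cup \calA_j = \calA_j$. For $j = s+k$ with $k \in \{1,\dots,q\}$, expanding $(1-t)\frkM + t\F$ gives $\eta_{s+k} = \frkm_k \sum_i \frkc_{i,k}\ell_i$ and $\frak h_{s+k} = \sum_i \ell_i(f_{i,k} - \frkc_{i,k}\frkm_k)$; since $\rho(\frkc_{i,k})$ and $\rho(\frke_{k,\cdot})$ are all non-zero, $\supp(\Theta_\sigma(\eta_{s+k})) = \{\ell_i\, m \,:\, 1\leq i \leq p,\ m \in \calB_k\}$, and the defining inclusion $\supp(f_{i,k}) \subseteq \calB_k$ gives the containment for $\frak h_{s+k}$. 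Finally, for $j = n+p$ we have $\frak h_{n+p} = 0$ and the inclusion is trivial.

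The main obstacle is case (ii): one must carefully exploit that the assumption $\sigma \in (\KKbar - \{0\})^{N+p}$ is strong enough to produce the \emph{full} monomial support $\{\ell_i m \,:\, i, m \in \calB_k\}$ in $\Theta_\sigma(\eta_{s+k})$, since otherwise cancellations in $\Theta_\sigma(\eta_{s+k})$ could violate the required strict inequality. Once this support lemma is in place, the conclusion assembles in one line: the coefficient of the lowest power of $t$ in $\Theta_\sigma(\frak H_j)(\bm\phi) = 0$ is $\Theta_\sigma(\init_\e(\eta_j))(\bm c)$, hence it vanishes for every $j$, and therefore $\bm c$ is a zero of $\Theta_\sigma(\init_\e(\bm\eta))$.
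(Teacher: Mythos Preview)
Your argument is correct and follows essentially the same route as the paper's proof: both establish the support inclusion $\supp(\Theta_\sigma(\frak h_j)) \subseteq \supp(\Theta_\sigma(\eta_j))$ via the same three-case split on $j$, and then deduce that the lowest-order contribution to $\Theta_\sigma(\frak H_j)(\bm\phi)$ comes solely from $\eta_j$, hence equals $\Theta_\sigma(\init_\e(\eta_j))(\bm c)$ and must vanish. The paper phrases the key step as a term-by-term correspondence rather than as a support inclusion plus valuation comparison, but the substance is identical.
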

\begin{proof} 
  For $i=1,\dots,s$, we have $\frak H_i= \frkr_i + t(g_i-\frkr_i)$, so
  $\eta_i=\frkr_i$ and $\frak h_i=g_i-\frkr_i$. Thus by construction,
  the monomial support of $\frak h_i$ (with respect to
  $x_1,\dots,x_n,\ell_1,\dots,\ell_p$) is the same as that of
  $\frkr_i$. This means that for any term $k x_1^{u_1} \cdots
  \ell_p^{u_{n+p}}$ in $\frak h_i$, with $k$ in $\KK[\frak A]$, there
  exists a term $k' x_1^{u_1} \cdots \ell_p^{u_{n+p}}$ in $\eta_i$,
  where $k'$ is one of the indeterminates $\frkd_{i,j}$.

  Take $\sigma$ as in the statement of the proposition, and write
  $a=\Theta_{\sigma}(\frak H_i)$, $b=\Theta_{\sigma}(\eta_i)$ and
  $c=\Theta_{\sigma}(\frak h_i)$, so that $b(\bm \phi) + t c(\bm \phi) =a(\bm \phi)=0$.
  Using our assumption on $\sigma$, we deduce that for any term of the form
  $k t \phi_1^{u_1} \cdots \phi_{n+p}^{u_{n+p}}$ appearing in $t c(\bm
  \phi)$, there is a term $k' \phi_1^{u_1} \cdots \phi_{n+p}^{u_{n+p}}$
  appearing in $b(\bm \phi)$, with non-zero coefficient $k'$. In
  particular, all terms of smallest valuation in $a(\bm \phi)$ appear
  in $b(\bm \phi)$, and must add up to zero.  Taking their first
  coefficient, this implies that $\bm c$ cancels $\init_\e(b)$.
  
  The proof for the polynomials $\frak H_{s+1},\dots,\frak H_{s+q}$,
  $\eta_{s+1},\dots,\eta_{s+q}$ and $\frak h_{s+1},\dots,\frak
  h_{s+q}$ is similar, taking into account that
  $\eta_{s+i}=(\frkc_{1,i} \ell_1 + \cdots + \frkc_{p,i}
  \ell_p)\frkm_i$. Indeed, again, for $i=1\dots,q$, the monomial
  support of $\frak h_{s+i}$ is the same as that of $\eta_{s+i}$; if
  we define $a,b,c$ as above, our assumption that no entry of $\sigma$
  vanishes implies as before that all terms of smallest valuation in
  $a(\bm \phi)$ appear in $b(\bm \phi)$, and add up to zero. Finally,
  for $\frak H_{s+q+1}=\frak H_{n+p}$, we have that $\frak h_{n+p}=0$,
  and the claim follows as above.
\end{proof}

Our last property requires a longer proof. For generic choices of
$\sigma$, it constrains the possible roots of the system
$\Theta_{\sigma}(\init_\e(\bm \eta))$ introduced in the previous
proposition.

\begin{proposition}\label{prop:gen3}
  There exists a non-empty Zariski open set $\Openinit\subset
  \KKbar{}^{N+p}$ such that for $\sigma \in \Openinit$, the
  following holds for any $\e$ in $\Q^{n+p}$: for $j=1,\dots,n+p$, the
  system obtained by setting the $j$-th variable to $1$ in
  $\Theta_{\sigma}(\init_\e(\bm \eta))$ has no solution in
  $(\KKbar-\{0\})^{n+p-1}$.
\end{proposition}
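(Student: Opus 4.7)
The plan is to combine a finiteness reduction with a decomposition trick and a standard transversality argument. First, since each $\eta_i$ has finite monomial support, only finitely many distinct systems $\init_\e(\bm\eta)$ arise as $\e$ varies over $\Q^{n+p}$. Hence I will construct, for each fixed initial form $\bm\eta^\star = \init_\e(\bm\eta)$ and each index $j \in \{1,\dots,n+p\}$, a non-empty Zariski open subset of $\KKbar^{N+p}$, and take $\Openinit$ to be their finite intersection.

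Next I exploit the product structure of the initial forms. For $i \leq s$, $\init_\e(\eta_i) = \init_\e(\frkr_i)$ is a sum of a subset of the terms of $\frkr_i$, linear in the indeterminates $\frkd_{i,\cdot}$. For $i = s+j'$ with $1 \leq j' \leq q$, the $\e$-weight of each monomial of $\eta_{s+j'} = (\sum_k \frkc_{k,j'}\ell_k)\frkm_{j'}$ splits into a part depending only on the chosen $\ell_k$ and a part depending only on the chosen monomial of $\frkm_{j'}$, so $\init_\e(\eta_{s+j'})$ factors as $L_{j'}(\bell)\cdot M_{j'}(\x)$, where $L_{j'}$ is a sub-sum of the linear form $\sum_k \frkc_{k,j'}\ell_k$ and $M_{j'}$ is a sub-sum of $\frkm_{j'}$. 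Finally $\init_\e(\eta_{n+p})$ is a sub-sum of $\{1, \frkt_1\ell_1,\dots,\frkt_p\ell_p\}$. If some $\init_\e(\eta_i)$ is constant (e.g.\ $\init_\e(\eta_{n+p})=1$), then $\Theta_\sigma$ of that equation is a nonzero constant for generic $\sigma$ and the system is inconsistent, so I may assume every initial form involves at least one variable. Writing $V(L_{j'}M_{j'}) = V(L_{j'})\cup V(M_{j'})$, the zero set of $\Theta_\sigma(\bm\eta^\star)$ with the $j$-th variable set to $1$ decomposes as a union of $2^q$ zero sets, one for each choice $\chi\in\{L,M\}^q$; it suffices to handle each such ``reduced'' system separately.

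For a fixed $\chi$, the reduced system has $n+p$ equations in $n+p-1$ variables, and crucially each equation is linear in a \emph{disjoint} block of parameters (the $\frkd_{i,\cdot}$, the $\frkc_{\cdot,j'}$, the $\frke_{j',\cdot}$, or the $\frkt_\cdot$). At any $(\x,\bell) \in (\KKbar-\{0\})^{n+p-1}$ with the $j$-th coordinate set to $1$, the partial of each equation with respect to a suitable parameter in its block is a monomial in $\x,\bell$ evaluated at a point all of whose coordinates are nonzero, and is therefore nonzero; combined with the disjointness of the blocks, the Jacobian with respect to parameters has full rank $n+p$ at every zero of the reduced system. Thom's weak transversality theorem, as invoked in the proof of Proposition~\ref{prop:generic_single}, then supplies a non-empty Zariski open set of parameters $\sigma$ for which $\bm 0$ is a regular value of the induced map $(\KKbar-\{0\})^{n+p-1}\to\KKbar^{n+p}$, forcing its fiber over $\bm 0$ to be empty on dimension grounds. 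Intersecting these opens over the finitely many triples $(\bm\eta^\star, j, \chi)$ yields $\Openinit$. The main obstacle is precisely the non-reducedness of the product $L_{j'}M_{j'}$: at a point where both factors vanish, every partial derivative of the product (with respect to variables or parameters alike) vanishes, so a direct transversality argument applied to the original initial-form equations cannot work; the $2^q$-fold decomposition into factors that are linear in the parameters is the trick that restores transversality.
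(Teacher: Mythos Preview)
Your proof is correct and takes a genuinely different route from the paper's. The paper does not decompose the product equations $L_{j'}M_{j'}$; it keeps them intact and proves, by induction on $i$, that for a generic choice of the first $i$ parameter blocks $\frak S_1,\dots,\frak S_i$, the common zero locus of $\bar\eta_1,\dots,\bar\eta_i$ in the torus has dimension at most $n+p-1-i$. The inductive step only needs that each $\bar\eta_i$, regarded as a polynomial in its own block $\frak S_i$, is nonzero at every torus point; for the product case this holds because each factor has that property separately. A short contradiction argument then shows that a generic specialization of $\bar\eta_i$ cannot vanish identically on any irreducible component of the partial intersection, so the dimension drops. This avoids both the $2^q$ decomposition and any appeal to Thom transversality. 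Your decomposition, by contrast, restores linearity in disjoint parameter blocks and lets you invoke transversality exactly as in Proposition~\ref{prop:generic_single}; the price is the finite union over $\chi\in\{L,M\}^q$. Both arguments rest on the same structural facts (pairwise-disjoint parameter blocks and nonvanishing monomial partials on the torus) but organize them differently: the paper's inductive dimension count is more elementary and self-contained, while your approach is more conceptual and reuses the pattern already established earlier in the paper.
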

\begin{proof}
  Even though there is an infinite number of vectors $\e$ to take into
  account, there is only a finite number of possible systems
  $\init_\e(\bm \eta)$. Thus, in what follows, we assume $\e$ is fixed
  and prove the existence of a suitable Zariski open set, knowing that
  we will eventually take the intersection of the open sets
  corresponding to the finite number of systems $\init_\e(\bm \eta)$.
  Similarly, without loss of generality, we assume $j=1$, so
  that we are setting $x_1$ to $1$.

  Thus, we call $ \bar{\bm \eta}=(\bar \eta_1,\dots,\bar \eta_{n+p})$
  the polynomials in $\KK[\frak A,\frak
    t][x_2,\dots,x_n,\ell_1,\dots,\ell_p]$ obtained by setting $x_1$
  to $1$ in $\init_\e(\bm\eta)$. We will prove that for a generic
  $\sigma$ in $\KKbar{}^{N+p}$, the system $\Theta_{\sigma}(\bar{ \bm
    \eta}) \subset \KKbar[x_2,\dots,x_n,\ell_1,\dots,\ell_p]$ has no
  solution in $(\KKbar-\{0\})^{n+p-1}$ (this system is indeed the one
  mentioned in the statement of the proposition, since $\Theta_\sigma$
  and variable evaluation commute).

  For $i=1,\dots,n+p$, denote by $\frak S_i$ the subset of $(\frak
  A,\frak t)$ consisting of those indeterminates that appear in the
  coefficients of $\eta_i$ (so it also contains those that appear in
  the coefficients of $\bar \eta_i$). With this convention, the sets
  $\frak S_i$ are pairwise disjoint, and $(\frak S_1,\dots,\frak
  S_{n+p})$ is the set of all indeterminate coefficients $(\frak
  A,\frak t)$ that appear in $\bm \eta$. For all $i$, we let $t_i$ be
  the cardinality of $\frak S_i$, and we will write the elements of
  $\KKbar{}^{t_i}$ as $\rho_i$, so that  a vector $\sigma \in
  \KKbar{}^{N+p}$ can be decomposed as $\sigma=(\rho_1,\dots,\rho_{n+p})$.
  Given $(\rho_1,\dots,\rho_i)$ in $\KKbar{}^{t_1+\cdots+t_i}$,
  $\Theta_{(\rho_1,\dots,\rho_i)}$ denotes as usual the mapping that
  evaluates the $t_1+\cdots+t_i$ indeterminates $\frak S_1,\dots,\frak
  S_i$ at $(\rho_1,\dots,\rho_i)$.

  The key property we will use below is the following: {\em for any
    $\balpha$ in $(\KKbar-\{0\})^{n+p-1}$, the polynomial $\gamma \in
    \KKbar[\frak S_i]$ obtained by evaluating
    $x_2,\dots,x_n,\ell_1,\dots,\ell_p$ at the coordinates of $\bm
    \alpha$ in $\bar \eta_i$ is non-zero.}  For $i=1,\dots,s$ and
  $i=n+p$, this is because the coefficients of $\bar \eta_i$ are sums
  of elements of $\frak S_i$, no element in $\frak S_i$ appears in two
  such coefficients, and all coordinates of $\bm \alpha$ are non-zero.
  For $i=s+1,\dots,n+p-1$, since $\eta_{i}$ is $(\frkc_{1,i-s} \ell_1
  + \cdots + \frkc_{p,i-s} \ell_p)\frkm_{i-s}$, its initial form
  $\init_\e(\eta_i)$ is the product $\init_\e(\frkc_{1,i-s} \ell_1 +
  \cdots + \frkc_{p,i-s} \ell_p)\init_\e(\frkm_{i-s})$.  After setting
  $x_1$ to $1$, we deduce that $\bar \eta_i$ factors as $\bar \eta_i=
  f_i g_i$, where the coefficients of both $f_i$ and $g_i$ are sums of
  elements of $\frak S_i$, and again, no element in $\frak S_i$
  appears in two such coefficients. Thus, the evaluations of $f_i$ and
  $g_i$ at $\bm \alpha$ are non-zero, and the same holds for $\bar
  \eta_i$.

  To describe algebraic sets in the torus $(\KKbar-\{0\})^{n+p-1}$, we
  work in $\KKbar{}^{n+p}$, using a new indeterminate $Z$ and taking
  into account the relation $x_2 \cdots x_n \ell_1 \cdots \ell_p
  Z=1$.  Then, for $i=0,\dots,n+p$, we will
  prove the following: {\em for a generic choice of $(\rho_1,\dots,\rho_i)$
    in $\KKbar{}^{t_1+\cdots+t_i}$ (in the Zariski sense), the
    zero-set of $\Theta_{(\rho_1,\dots,\rho_i)}(\bar
    \eta_1,\dots,\bar \eta_i)$ and $x_2 \cdots x_n
    \ell_1 \cdots \ell_p Z-1$ has dimension at most $n+p-1-i$ in
    $\KKbar{}^{n+p}$.} Taking $i=n+p$ proves our claim.

  The proof is by induction on $i$. For $i=0$, there is nothing to
  prove, so let us assume that our claim holds for $i-1$ (for some
  index $i \ge 1$), and prove that it holds at index $i$.  We proceed
  by contradiction, assuming our claim does not hold.  In this case,
  the vectors $(\rho_1,\dots,\rho_i)$ for which the zeros of
  $\Theta_{(\rho_1,\dots,\rho_i)}(\bar \eta_1,\dots,\bar \eta_i)$ and $x_2
  \cdots x_n \ell_1 \cdots \ell_p Z-1$ have dimension at most
  $n+p-1-i$ in $\KKbar{}^{n+p}$ are contained in a hypersurface of the
  parameter space $\KKbar{}^{t_1 + \cdots + t_i}$. Thus  they satisfy a
  relation $P(\rho_1,\dots,\rho_i) = 0$, for some non-zero polynomial $P$ in
  $\KKbar[\frak S_1,\dots,\frak S_i]$. Then, take $(\rho_1,\dots,\rho_{i-1})$ in
  $\KKbar{}^{t_1 + \cdots + t_{i-1}}$ such that
  \begin{itemize}
  \item $P(\rho_1,\dots,\rho_{i-1},\frak S_i) \in \KKbar[\frak S_i]$ is not
    identically zero;
  \item the zero-set $V$ of $\Theta_{(\rho_1,\dots,\rho_{i-1})}(\bar
    \eta_1,\dots,\bar \eta_{i-1})$ and $x_2 \cdots x_n \ell_1 \cdots
    \ell_p Z-1$ has dimension at most $n+p-i$ in $\KKbar{}^{n+p}$
    (this is possible by the induction assumption). By Krull's
    theorem, all its irreducible components have dimension exactly
    $n+p-i$.
  \end{itemize}
  The first condition implies that for a generic $\rho_i$ in
  $\KKbar{}^{t_i}$, the zero-set of $\Theta_{(\rho_1,\dots,\rho_i)}(\bar
  \eta_1,\dots,\bar \eta_i)$ and $x_2 \cdots x_n
  \ell_1 \cdots \ell_p Z-1$ has dimension at least
  $n+p-i$. Equivalently, this means that intersection of $V$ and
  $\Theta_{(\rho_1,\dots,\rho_i)}(\bar \eta_i)$ has dimension $n+p-i$. Let us see
  how to derive a contradiction.
  
  Let $V_1,\dots,V_d$ be the irreducible components of $V$. Pick $\bm
  \alpha_1$ in $V_1$, \dots, $\bm \alpha_d$ in $V_d$, and let
  $\gamma_1,\dots,\gamma_d$ be the polynomials in $\KKbar[\frak S_i]$
  obtained by evaluating $x_2,\dots,x_n,\ell_1,\dots,\ell_p$ at the
  coordinates of  $\bm \alpha_1,\dots,\bm \alpha_d$, respectively, in
  $\bar \eta_i$. As we pointed out above, all $\gamma_i$'s are
  non-zero, and thus so is $\Gamma:= \gamma_1\cdots\gamma_d \in
  \KKbar[\frak S_i]$. In particular, for a generic choice of $\rho_i$ in
  $\KKbar{}^{t_i}$, $\Theta_{(\rho_1,\dots,\rho_i)}(\bar \eta_i)$ vanishes
  at none of $\bm \alpha_1,\dots,\bm \alpha_d$, and so it intersects each
  $V_i$ (and thus $V$) in dimension $n+p-i-1$. This contradicts the
  previous paragraph.
\end{proof}


\subsection{Boundedness property}\label{ssec:boundedness}

Using the results in the previous subsection, we finally establish the
second property needed for our homotopy algorithm: we prove that for a
generic $\rho$ in $\KKbar{}^N$, the solutions of
$\bB=\Theta_\rho(\frkB)$ in $\KKbar\langle\langle t \rangle\rangle^n$
are bounded.
    
\begin{proposition}\label{prop:boundedness}
  There exists a non-empty Zariski open set $\Openbounded\subset
  \KKbar{}^{N}$ such that for $\rho\in \Openbounded$, writing
  $\bB:=\Theta_\rho(\frkB)$, all points in $V(\bB)\subset
  \KKbar\langle\langle t \rangle\rangle^n$ are bounded.
\end{proposition}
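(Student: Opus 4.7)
The plan is to argue by contradiction, lifting an unbounded Puiseux solution of $\bB$ to one of the Lagrange system $\frak H$ and then invoking Propositions~\ref{prop:inite} and~\ref{prop:gen3}. First I take $\Openbounded \subset \KKbar^N$ to be the intersection of three Zariski open conditions: (i) $\rho$ lies in the image under the projection $\KKbar^{N+p} \to \KKbar^N$ of the set $\Openinit$ from Proposition~\ref{prop:gen3}; (ii) every entry of $\rho$ is non-zero (as required by Proposition~\ref{prop:inite}); and (iii) no Puiseux solution of $\bB = \Theta_\rho(\frkB)$ has an identically-zero coordinate. The set in (i) is open because $\Openinit$ is a finite intersection of principal opens, so it equals $\{P \ne 0\}$ for some $P$, and its image is the complement of $\{\rho : P(\rho,\tau) \equiv 0 \text{ as a polynomial in } \tau\}$, a Zariski closed set.

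Fix $\rho \in \Openbounded$ and assume for contradiction that $\bm\phi \in V(\bB) \subset \KKbar\langle\langle t \rangle\rangle^n$ is unbounded. By condition~(iii), each $\phi_i = t^{e_i}(c_i + \cdots)$ with $c_i \ne 0$ and $e_i = \nu(\phi_i) \in \Q$, and unboundedness forces some $e_j < 0$ with $j \le n$. Since $\bm\phi$ cancels all $p$-minors of $\Theta_\rho({\bf V})$, the matrix $\Theta_\rho({\bf V})(\bm\phi)$ has rank $<p$, so its left kernel contains a non-zero $\bell$. I now pick $\tau \in \KKbar^p$ simultaneously in the fiber of $\Openinit$ over $\rho$ and satisfying $\tau\cdot\bell \ne 0$: both conditions are open and non-empty (the latter because $\bell \ne 0$), so their intersection is non-empty. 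After rescaling $\bell$ by $1/(\tau\cdot\bell)$, the pair $(\bm\phi,\bell)$ satisfies $\Theta_\sigma(\frak H) = 0$ with $\sigma=(\rho,\tau) \in \Openinit$, and a parallel genericity argument (choosing $\bell$ inside the kernel so that no coordinate identically vanishes) lets me assume each $\ell_i = t^{e_{n+i}}(c_{n+i}+\cdots)$ with $c_{n+i}\ne 0$.

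Setting $\e = (e_1,\dots,e_{n+p})\in\Q^{n+p}$, Proposition~\ref{prop:inite} then gives that the leading-coefficient vector $\bm c \in (\KKbar-\{0\})^{n+p}$ is a root of $\Theta_\sigma(\init_\e(\bm\eta))$. Each component of $\init_\e(\bm\eta)$ is weighted homogeneous with weight vector $\e$, so $(\lambda^{e_1}c_1,\dots,\lambda^{e_{n+p}}c_{n+p})$ is also a root for any $\lambda \in \KKbar - \{0\}$. Choosing $\lambda$ with $\lambda^{e_j}c_j = 1$ (possible because $e_j \ne 0$) produces a root whose $j$-th coordinate equals $1$ and whose other coordinates are non-zero, directly contradicting Proposition~\ref{prop:gen3}. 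Therefore no unbounded Puiseux solution of $\bB$ exists.

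The hardest step is establishing condition~(iii) and the analogous non-vanishing of each $\ell_i$. For (iii), the key fact is that every $\calA_i$ and $\calB_j$ contains the origin, so restricting $\bB$ to a coordinate hyperplane $\{x_i = 0\}$ yields a sparse determinantal system of the same shape in $n-1$ variables; I plan to close the loop by induction on $n$, combining the inductive boundedness statement on the restricted system with Proposition~\ref{prop:start_pro} as the base case. For the $\ell_i$, the issue is ruling out the possibility that the entire left kernel of $\Theta_\rho({\bf V})(\bm\phi)$ lies in some hyperplane $\{\ell_i = 0\}$; this would force the $i$-th row of $\Theta_\rho({\bf V})(\bm\phi)$ to be linearly independent of the others, a rank-stratification condition that fails for generic $\rho$ by an argument patterned on the proof of Proposition~\ref{prop:start_pro}.
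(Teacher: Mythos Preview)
Your main line coincides with the paper's: project $\Openinit$ to $\KKbar^N$, intersect with $(\KKbar-\{0\})^N$, lift a Puiseux root $\balpha$ of $\bB$ to a root $(\balpha,\bar{\bm\lambda})$ of $\Theta_\sigma(\frak H)$ by choosing $\tau$ in the fibre over $\rho$, apply Proposition~\ref{prop:inite} to pass to initial forms, and then use weighted homogeneity together with Proposition~\ref{prop:gen3} to contradict any negative valuation. The paper's only extra wrinkle is that it secures $\tau\cdot\bell\ne 0$ by imposing the condition on the leading-coefficient vector of $\bell$ in $\KKbar^p$, which makes Zariski-openness of that constraint on $\tau$ immediate.

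The genuine gap is in your treatment of identically-zero coordinates. Your induction on $n$ for condition~(iii) does not go through: restricting $\bB$ to $\{x_i=0\}$ leaves $p,q,s$ unchanged, so the restricted system sits in $n-1$ variables while $q-p+s+1$ is still $n$, and the boundedness proposition you want to invoke inductively (which rests on Proposition~\ref{prop:gen3} and hence on that dimension count) simply does not apply to it. Even if it did, boundedness of the restricted Puiseux solutions would only let them specialise at $t=0$; you would still need, separately, that the restricted start system is empty and that specialisation behaves as in Proposition~\ref{pro:start_mat}, neither of which your induction supplies. The argument for the $\ell_i$ has the same defect: if the left kernel over the Puiseux field is one-dimensional there is no freedom to ``choose $\bell$ with no zero coordinate'', and the rank-stratification condition you invoke concerns a matrix evaluated at a Puiseux point that itself depends on $\rho$, so it is not visibly a polynomial condition on $\rho$ alone. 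In fairness, the paper simply writes $\bm\phi=(t^{e_1}c_1+\cdots,\ldots)$ with all $c_i\ne 0$ and does not comment further, so you have put your finger on a real subtlety; the point is only that your proposed fix does not close it.
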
 
\begin{proof}
  By Proposition~\ref{prop:gen3}, there exists a non-empty Zariski
  open set $\Openinit\subset \KKbar{}^{N+p}$ such that for any
  $\sigma=(\rho,\tau)$ in $\Openinit$, the following holds: for any
  $\e$ in $\Q^{n+p}$ and any $j$ in $\{1,\dots,n+p\}$, the system
  obtained by setting the $j$-th variable to $1$ in
  $\Theta_{\sigma}(\init_\e(\bm \eta))$ has no solution in
  $(\KKbar-\{0\})^{n+p-1}$.

  We then let $\Openinit' \subset \KKbar{}^N$ be the image of
  $\Openinit$ through the projection $\pi: \sigma =(\rho,\tau) \mapsto
  \rho$; this is a non-empty Zariski open. Finally, we let
  $\Openbounded$ be the intersection of $\Openinit'$ with
  $(\KKbar{}-\{0\})^{N} \subset \KKbar{}^N$. We take $\rho$ in
  $\Openbounded$ and we prove that all solutions of
  $\Theta_{\rho}(\frkB)$ in $\KKbar\langle\langle t \rangle\rangle^n$
  are bounded.

  Take such a solution, and write it
  $\balpha=(\alpha_1,\dots,\alpha_n) \in \KKbar\langle\langle t
  \rangle\rangle^n$. By construction, there exists a non-zero
  $(\lambda_1,\dots,\lambda_p)\in \KKbar\langle\langle t
  \rangle\rangle^p$ such that $[\lambda_1~\cdots~\lambda_p]$ is in the
  left nullspace of $\frkM(\balpha)$. Let $v \in \Q$ be the valuation
  of this vector, and let $(\lambda'_1,\dots,\lambda'_p) \in
  \KKbar{}^p$ be the vector of coefficients of $t^v$ in
  $(\lambda_1,\dots,\lambda_p)$, so that
  $(\lambda'_1,\dots,\lambda'_p)$ is not identically zero.  Let us
  then take $\tau=(\tau_1,\dots,\tau_p)$ such that
  $\sigma:=(\rho,\tau)$ is in $\Openinit$ and in addition
  $\tau_1 \ne 0,\dots,\tau_p\ne 0$ and $ \tau_1 \lambda'_1 + \cdots +
  \tau_p \lambda'_p \ne 0$ (this is possible, since all these conditions 
  are Zariski-open). In particular, $ \tau_1 \lambda_1 + \cdots
  + \tau_p \lambda_p \ne 0$.  We can then define
  $\bar {\bm \lambda}=(\bar\lambda_1,\dots,\bar\lambda_p)$ by $\bar\lambda_i = \lambda_i
  /( \tau_1 \lambda_1 + \cdots + \tau_p \lambda_p) $ for all $i$.  Let
  us write $\bm \phi = (\balpha,\bar {\bm\lambda})$; our goal is then to
  prove that $\bm \phi$ is bounded, since it will imply that $\balpha$
  is bounded.

By construction, the vector $[\bar\lambda_1~\cdots~\bar\lambda_p]$ is
still in the left nullspace of $\frkM(\balpha)$ and satisfies
$\tau_1\bar\lambda_1 + \cdots + \tau_p \bar\lambda_p -1 = 0$. Hence, the vector
$\bm \phi$ is in $V(\Theta_\sigma(\frak H))$. Let us then write $\bm
\phi = (t^{e_1}c_1+\dots, \ldots, t^{e_{n+p}}c_{n+p}+\dots)$ with, for
all $i=1,\dots,n+p$, $e_i$ in $\Q$ and $c_i$ in $\KKbar-\{0\}$. Because
none of the coordinates of $\sigma$ vanishes, we can apply
Proposition~\ref{prop:inite}, and deduce that $\bm
c=(c_1,\dots,c_{n+p})$ cancels $\Theta_{\sigma}(\init_\e(\bm\eta))$,
with $\e = (e_1,\dots,e_{n+p})$.

Suppose then by way contradiction that some $e_i$ is negative; without
loss of generality, we can assume that $e_1 < 0$. The polynomials
$\Theta_{\sigma}(\init_\e(\bm\eta))$ are weighted-homogeneous, for the
weight vector $\e$. In particular, the point
$$\tilde {\bm c} = \left (1, \frac{c_2}{\epsilon^{e_2}}, \ldots,
\frac{c_{n+p}}{\epsilon^{e_{n+p}}} \right )$$ is also a solution of
these equations, where $\epsilon$ denotes any element in $\KKbar$ such
that $\epsilon^{e_1} = c_1$.  Note that none of the coordinates of the
vector $\tilde{\bm c}$ vanishes.  However, by construction, $\sigma$
is in $\Openinit$, so Proposition~\ref{prop:gen3} asserts that the
system obtained by setting the first variable $x_1$ to $1$ in
$\Theta_{\sigma}(\init_\e(\bm \eta))$ has no solution in
$(\KKbar-\{0\})^{n+p-1}$. This is the contradiction we wanted, so we
have $e_i \ge 0$ for all $i$, as claimed.
\end{proof}

At this stage, to prove Proposition~\ref{prop:hyp}, it suffices to let
$\Open$ be the intersection of $\Openstart$ (from
Proposition~\ref{prop:start_pro}) and $\Openbounded$ (from the
proposition above). 

\section{Cost analysis}\label{sec:complexity}

Let the polynomials in $\g=(g_1,\dots,g_s)$ and $\F=[f_{i,j}]_{1\le i
  \le p,\ 1 \le j \le q}$ be as before. To find the isolated points in 
$ V_p(\F, \g) $, we take $\bm B = \Theta_\rho(\frkB)$ as in the previous 
section, for a randomly chosen $\rho \in \KK^N$ and apply the 
{\sf Homotopy} algorithm of Proposition~\ref{pro:start_mat}.

Proposition~\ref{prop:hyp} established the basic properties needed for
the correctness of our homotopy algorithm. To finish the analysis, and
establish a cost bound, we now give upper bounds on the parameters
that appear in the runtime reported in
Proposition~\ref{pro:start_mat}, such as the size of the input, the
number of solutions to our start system and on the degree of the
homotopy curve; we also have to give the cost of solving the start
system.

We first consider the case of arbitrary sparse polynomials, for which
we state our results in terms of certain mixed volumes; later we
discuss the particular case of weighted-degree polynomials. Some
quantities will be defined similarly in both cases.  As before, for
$i=1,\dots,s$, $\calA_i \subset \N^n$ denotes the support of $g_i$, to
which we add the origin ${\bf 0} \in \N^n$, and for $j=1,\dots,q$,
$\calB_j \subset \N^n$ is the union of the supports of the polynomials
in the $j$-th column of $\F$, to which we add $\bf 0$ as well. For
indices $i,j$ as above, we let $a_i$, respectively $b_j$, be the
cardinality of $\calA_i$, respectively\ $\calB_j$.  As input, in
either case, we are given $\g$ and $\F$ through the list of their
non-zero terms; this involves $O(\gamma)$ elements in $\KK$, with
\begin{equation}\label{eqdef:gamma}
\gamma:=a_1 + \cdots + a_s + p(b_1 + \cdots + b_q).  
\end{equation}
Finally, we let $d$ be the maximum degree of all the polynomials in $\bm g$
and $\bm F$.


\subsection{General sparse polynomials}
\label{ssec:degreebound}

\paragraph{Representing the input.} 
The algorithm in Proposition~\ref{pro:start_mat} takes as input a
straight-line program representation of the polynomials
$\bB=\Theta_\rho(\frkB)$. To obtain such a straight-line program is
straightforward. We first compute the values of all monomials
supported on $\calA_1,\dots,\calA_s,\calB_1,\dots,\calB_q$; we then
combine them to obtain the polynomials $(1-t) \cdot \Theta_\rho(\frkr)
+ t \cdot \g$ and the matrix $(1-t)\cdot \Theta_\rho(\frkM)+t\cdot
\F$, and take all $p$-minors in this matrix. 

Computing the value of a single monomial supported on $\calA_i$,
respectively\ $\calB_j$, can be done through repeated squaring, using $O(n
\log(d))$ operations in $\KK$. Hence, we can obtain the values of all
monomials supported on $\calA_1,\dots,\calA_s,\calB_1,\dots,\calB_q$ by
using a straight-line program of length $O(n \gamma \log(d))$. Combining
these monomials to obtain $(1-t) \cdot \Theta_\rho(\frkr) + t \cdot
\g$ and $(1-t)\cdot \Theta_\rho(\frkM)+t\cdot \F$ takes another
$O(\gamma)$ operations. Finally, it takes $O(p^4 {q \choose p} )$
operations to compute all $p$-minors of the latter matrix using a
division-free determinant algorithm. Altogether, we obtain a
straight-line program of length
\begin{equation}\label{eqdef:betasparse}
\beta \in 
O\left(n \gamma \log(d) +    p^4 {q \choose p} \right )
\end{equation}
to compute all entries of $\bB$.

\paragraph{Number of solutions of the start system.}
For $\rho$ in the open set $\Open \subset \KKbar{}^N$ defined in
Proposition~\ref{prop:hyp}, we saw that the solutions of the
start system $\bB_{t=0}$ are the disjoint union of the solutions of
the systems $\Theta_\rho(\frkM_{\j},\r)$, where for a subset
$\j=\{i_1, \dots, i_{n-s}\}$ of $\{1, \dots, q\}$ we write $\frkM_{\j}
= (\frkm_{i_1}, \dots, \frkm_{i_{n-s}})$.

For $i=1,\dots, s$ and $j=1,\dots, q$, we let $\calC_i$ and
$\calD_j$ be the convex hulls of respectively $\calA_i$ and $\calB_j$.
Proposition~\ref{theorem:bers75} then implies that, for $\j$ as above,
the number of solutions of $\Theta_\rho(\frkM_\j, \frkr)$ in
$\KKbar{}^n$ is  the mixed volume $$\chi_\j := {\sf
  MV}(\calC_1, \dots, \calC_s, \calD_{i_1}, \dots, \calD_{i_{n-s}})$$
for any $\rho$ in a certain non-empty Zariski open set ${\OpenBKK}_\j
\subset \KKbar{}^N$.  Define
\begin{equation} \label{eq:zzz}
\chi := \sum_{\i=\{i_1, \dots, i_{n-s}\} \subset \{1, \dots,
  q\}}\chi_\j= \sum_{\i=\{i_1, \dots, i_{n-s}\} \subset \{1, \dots,
  q\}} 
{\sf MV}(\calC_1, \dots, \calC_s, \calD_{i_1}, \dots,
\calD_{i_{n-s}}),
\end{equation}
and let $\Open'$ be the intersection of $\Open$ with the finitely many
${\OpenBKK}_\j$. Then, for $\rho$ in $\Open'$, the start system
$\bB_{t=0}$ has precisely $\chi$ solutions.  As we pointed out after
Proposition~\ref{prop:hyp}, this implies that the system $\bB_{t=1}$
which we want to solve admits at most $\chi$ isolated solutions,
counted with multiplicities.

\paragraph{Solving the start system.} 
To solve the systems $\Theta_\rho(\frkM_{\j},\frkr)$, we rely on the
sparse symbolic homotopy algorithm of~\cite[Section~5]{Jer09}. This
algorithm finds the solutions of a sparse system of $n$ equations in
$n$ unknowns, with arbitrary support and generic coefficients (in the
Zariski sense); this means that in addition to the constraint
$\rho\in\Open$, our choice of $\rho$ will also have to satisfy the
constraints stated in that reference.

The runtime of this algorithm depends on some combinatorial quantities
(we refer to the original reference for a more extensive discussion):
we need a so-called {\em lifting function} $\bm\omega_\i$, and the
associated {\em fine mixed subdivision} $M_\i$, for the support
$\calA_1,\dots,\calA_s,\calB_{i_1},\dots,\calB_{i_{n-s}}$ of $\frkr$ and
$\frkM_{\j}$~\cite{Huber95}. We then let $w_{\i}$ be the maximum value
taken by $\bm\omega_\i$ on the support, and $\mu_\i$ be the maximum norm
of the (primitive, integer) normal vectors to the cells of $M_\i$.
Then, the algorithm in~\cite[Theorem~6.2]{Jer09}  compute as
zero-dimensional parametrization $\scrR_{\j}$ such that $Z(\scrR_{\j})
= V(\Theta_\rho(\frkM_\j, \frkr))$ using $\softO(n^5 \gamma \log(d)
\chi_\i^2 \mu_\i w_\i )$ operations in $\KK$.

Taking the union of all these parametrizations, using for
example,~\cite[Lemma J.3]{SafeySchost17}, does not introduce any added
cost. Thus we obtain a randomized algorithm to compute a
zero-dimensional parametrization of $V_p(\Theta_\rho(\frkM,
\frkr))$ using
\begin{equation}\label{eq:wm}
\softO(n^5 \gamma \log(d) \chi^2 \mu w)  
\end{equation}
operations in $\KK$, where we write $\mu:=\max_\i(\mu_\i)$ and
$w:=\max_\i(w_\i)$.

\paragraph{Degree of the homotopy curve.}
The complexity of the ${\sf Homotopy}$ algorithm depends on $\chi$,
which measures the number of solutions which are tracked during the
homotopy, and on the precision $t^\varrho$ at which we need to do the
computations. As mentioned in Section~\ref{sec:homotopy}, an upper
bound for $\varrho$ is the number of isolated points defined by the
equations in $\bB=\Theta_\rho(\frak B)$ together with a generically
chosen hyperplane.

Let $h = \zeta_0 + \zeta_1\, x_1 + \cdots + \zeta_n \, x_n +
\zeta_{n+1} t$ be a linear form defining such a hyperplane (here, we
take $\zeta_i\in \KK$). Using it allows us to rewrite $t$
as $$\wp(x_1, \dots, x_n) = -(\zeta_0 + \zeta_1\, x_1 + \cdots +
\zeta_{n}\, x_n)/\zeta_{n+1}.$$ The isolated points in $V(\bB) \cap
V(h)$ are in one-to-one correspondence with the isolated solutions of
the system $\bB' = (b_1', \dots, b_s', b_{s+1}', \dots, b_m')$, where
$b_i' = (1-\wp)r_i + \wp g_i$, for $i=1, \dots, s$, and $(b_{s+1}',
\dots, b_m')$ are the $p$-minors of the matrix ${\bf V}' = [v_{i, j}']
= (1-\wp)\M + \wp \F \in \KK[x_1, \dots, x_n]^{p \times q}$. Hence it
is sufficient to bound the number of isolated solutions of $V(\bB')$.

For $i=1,\dots, p$ and $j=1,\dots,q$, let $\calB_{i, j}'$ be the
support of $v_{i, j}'$. We then define $\calB_j' = \cup_{1 \leq i \leq
  p} \calB_{i, j}'$, to which we add the origin if needed, and let
$\calD_{j}'$ be its Newton polytope. Similarly, for $i=1,\dots,s$ we
let $\calC_i'$ denote the Newton polytope of the support of $b_i'$. 
Then, the discussion on the number of solutions of the target 
system still applies, and shows that the system $\bB'$ admits 
at most 
\begin{equation}\label{eqdef:varrho}
\varrho = 
\sum_{\{i_1, \dots, i_{n-s}\} \subset \{1, \dots,
  q\}} {\sf MV}(\calC_1',
\dots, \calC_s', \calD_{i_1}', \dots, \calD_{i_{n-s}}')  
\end{equation}
solutions.

\paragraph{Completing the cost analysis.}
The previous discussion allows us to use the {\sf Homotopy} algorithm
from~Proposition~\ref{pro:start_mat}. In addition to the polynomials
$\g$ and matrix $\F$, we also need the combinatorial information
$\bm\omega_\i,M_\i$ described previously. The sum of the costs of
solving the start system, and of the {\sf Homotopy} algorithm is as
follow.

\begin{theorem}\label{thm:sparsedeterminantal}
  The set $V_p(\F, \g)$ admits at most $\chi$ isolated solutions,
  counted with multiplicities. There exists a randomized algorithm
  which takes $\g$, $\F$, all lifting functions $\bm\omega_i$ and
  subdivisions $\M_\i$ as input and computes a zero-dimensional
  parametrization of these isolated solutions using
  \[\softO\left(   n^5\left(
 \gamma \log(d) \chi^2 \mu w
+
\chi(\varrho + \chi^5) \binom{q}{p} \right )\right
)\]
operations in $\KK$, where
$\gamma,\chi,\varrho$ are as in
respectively~\eqref{eqdef:gamma},~\eqref{eq:zzz}
and~\eqref{eqdef:varrho}, and $\mu$ and $w$ as in~\eqref{eq:wm}.
\end{theorem}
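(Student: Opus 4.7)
The plan is to assemble the ingredients developed in the previous sections. First I would fix a random $\rho \in \KK^N$ lying in the intersection of the Zariski-open set $\Open \subset \KKbar{}^N$ of Proposition~\ref{prop:hyp}, the finitely many open sets $\OpenBKK_\j$ of Proposition~\ref{theorem:bers75} (one for each subset $\j=\{i_1,\dots,i_{n-s}\}\subset\{1,\dots,q\}$), and the open set required by Jeronimo's sparse symbolic homotopy algorithm \cite[Theorem~6.2]{Jer09}. By Proposition~\ref{prop:hyp}, $\bB := \Theta_\rho(\frkB)$ satisfies the hypotheses of Proposition~\ref{pro:start_mat}, and by the argument in the proof of Proposition~\ref{prop:start_pro}, the solution set of $\bB_{t=0}$ decomposes as the disjoint union, over such $\j$, of the solution sets of the square subsystems $\Theta_\rho(\frkM_\j,\frkr)$. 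Proposition~\ref{theorem:bers75} then gives $|V(\Theta_\rho(\frkM_\j,\frkr))| = \chi_\j$ for each $\j$, so $\bB_{t=0}$ has exactly $\chi$ solutions. Proposition~\ref{pro:start_mat} immediately yields the announced bound $\chi$ on the number of isolated points of $V_p(\F,\g)$ counted with multiplicity.

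For the complexity, I would next build the straight-line program $\Gamma$ of length $\beta \in O(n\gamma\log d + p^4\binom{q}{p})$ from~\eqref{eqdef:betasparse} exactly as outlined before the theorem: repeated squaring for the monomial evaluations, linear combinations to assemble $(1-t)\Theta_\rho(\frkr) + t\g$ and $(1-t)\Theta_\rho(\frkM) + t\F$, and a division-free determinant algorithm for each of the $\binom{q}{p}$ $p$-minors. Then I would solve the start system by applying \cite[Theorem~6.2]{Jer09} independently to each square subsystem $\Theta_\rho(\frkM_\j,\frkr)$ using the input data $(\bm\omega_\j, M_\j)$. Each call costs $\softO(n^5 \gamma \log(d) \chi_\j^2 \mu_\j w_\j)$, and merging the resulting $\binom{q}{p}$ zero-dimensional parametrizations via \cite[Lemma~J.3]{SafeySchost17} adds no asymptotic cost, for a total of $\softO(n^5 \gamma \log(d) \chi^2 \mu w)$ operations to produce a parametrization of $V(\bB_{t=0})$.

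Finally I would feed this parametrization, together with $\Gamma$ and the bound $\varrho$ of~\eqref{eqdef:varrho}, into the ${\sf Homotopy}$ routine of Proposition~\ref{pro:start_mat}, obtaining a zero-dimensional parametrization of the isolated zeros of $\bB_{t=1}$, which by construction coincide with the isolated points of $V_p(\F,\g)$; this step costs $\softO(\chi(\varrho+\chi^5)n^4\beta)$ operations. Summing the two cost contributions, substituting the bound on $\beta$, and absorbing polynomial factors in $n$ and $p$ into the $\softO$ notation yields the stated total.

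The main bookkeeping subtlety is justifying~\eqref{eqdef:varrho} as an upper bound on the degree of the homotopy curve (which is what the parameter $\varrho$ in Proposition~\ref{pro:start_mat} actually requires). As discussed before~\eqref{eqdef:varrho}, substituting $t = \wp(x_1,\dots,x_n)$ coming from a generic linear form $h$ turns $\bB$ into a sparse determinantal system $\bB'$ on $\KK[x_1,\dots,x_n]$ of the same shape, whose isolated solutions are in bijection with the isolated points of $V(\bB)\cap V(h)$; the mixed-volume count already carried out for $\chi$ applies verbatim to $\bB'$ with the shifted supports $\calC_i', \calD_j'$, producing the expression for $\varrho$. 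Apart from this verification, the proof is purely a matter of chaining the previous propositions and tallying costs.
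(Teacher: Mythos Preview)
Your proposal is correct and follows precisely the paper's own approach: the theorem there is simply the summary of the cost analysis carried out in the paragraphs of Section~\ref{ssec:degreebound}, combining the straight-line program bound~\eqref{eqdef:betasparse}, the start-system count~\eqref{eq:zzz} and cost~\eqref{eq:wm}, the homotopy-curve degree bound~\eqref{eqdef:varrho}, and Proposition~\ref{pro:start_mat}, exactly as you outline. (One harmless slip: the number of square subsystems $\Theta_\rho(\frkM_\j,\frkr)$ to merge is $\binom{q}{n-s}=\binom{q}{p-1}$, not $\binom{q}{p}$, but this has no effect on the stated cost.)
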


 
\subsection{Weighted-degree polynomials}

Weighted polynomial domains are multivariate polynomial rings
$\KK[x_1, \dots, x_n]$ where each variable $x_i$ has an integer weight
$w_i \geq 1$ (denoted by $\wdeg(x_i) = w_i$). The weighted degree of a
monomial $x_1^{\alpha_1} \cdots x_n^{\alpha_n}$ is then $\sum_{i=1}^n
w_i\alpha_i$, and the weighted degree $\wdeg(f)$ of a polynomial $f$
is the maximum of the weighted degrees of its terms with non-zero
coefficients.

Weighted domains arise naturally in determining isolated critical
points of a symmetric function $\phi$ defined over a variety $V(f_1,
\ldots, f_s)$ defined by symmetric functions $f_i$.  In~\cite{FLSSV},
with J.-C. Faug\`ere, we show that the orbits of these critical points can
be described by domains of the form $\KK[e_{1, 1}, \ldots,
  e_{1,\ell_1},e_{2,1}, \ldots, e_{2,\ell_2}, \dots, e_{r, 1}, \ldots,
  e_{r,\ell_r}]$ with $e_{i, k}$ the $k$-th elementary symmetric
function on $\ell_i$ letters.  Measured in terms of these letters,
each $e_{i, k}$ has naturally weighted degree $k$.

Polynomials in weighted domains have a natural sparse structure when
compared to polynomials in classical domains.  For example, a polynomial
$p \in \KK[x_1, x_2, x_3]$ having total degree bounded by $10$ has
$286$ possible terms in a classical domain. However in a weighted
domain with weights $\bw=(5,3, 2)$ there are only $19$ possible
terms. Such a reduction also exists when considering bounds for
solutions of polynomial systems when comparing classical to weighted
domains. For instance, B\'ezout's theorem bounds the number of
isolated solutions to polynomial systems of equations by the product
of their degrees. With polynomial systems lying in a weighted
polynomial domain $\KK[x_1, \dots, x_n]$ having weights $\bw = (w_1,
\dots, w_n) \in \Z^n_{> 0}$, the weighted B\'ezout theorem (see
e.g.~\cite{James99}) states that the number of isolated points of
$V(f_1, \dots, f_n) \subset \KKbar^{n}$ is bounded by
\begin{equation}\label{weighted:bezout}
\quad \quad \quad \quad \delta = \frac{d_1 \cdots d_n}{w_1 \cdots w_n}
~~\mbox{ with } ~~ d_i = \wdeg(f_i). 
\end{equation}

In this section we show how our sparse homotopy algorithm also allows
us to describe the isolated points of $V_p(\F,\g)$ where $\F = [f_{i,
    j}] \in \KK[x_1, \dots, x_n]^{p\times q}$ and $\g = (g_1, \dots,
g_s) \in \KK[x_1, \dots, x_n]^s$ with $n=q-p+s+1$, assuming bounds on
the weighted degrees of all polynomials $f_{i,j}$ and $g_j$.  Without
loss of generality, we will assume that $w_1 \leq \cdots \leq w_n$,
and we will let $(\gamma_1, \ldots, \gamma_s)$ be the weighted degrees
of $(g_1, \ldots, g_s)$ and $(\delta_1, \ldots, \delta_q)$ be the
weighted column degrees of $\F$.

In particular, the monomial supports $\calA_1,\dots,\calA_s$ of
$g_1,\dots,g_s$ are contained in the sets $\calA'_1,\dots,\calA'_s$,
where $\calA'_i$ is the set of all $(e_1,\dots,e_n) \in \N^n$ such that
$w_1 e_1 + \cdots + w_n e_n \le \gamma_i$. Similarly, for $1\leq j
\leq q$, $\calB_j \subset \N^n$ is contained in the set $\calB'_j$ of
all $(e_1,\dots,e_n) \in \N^n$ for which $w_1 e_1 + \cdots + w_n e_n
\le \delta_j$. The sets $\calA'_i$, respectively\ $\calB'_j$, are the
supports of generic polynomials of weighted degrees at most
$\gamma_i$, respectively $\delta_j$. We denote their cardinalities by 
$a'_1,\dots,a'_s$ and $b'_1,\dots,b'_q$.

\paragraph{Representing the input.} We follow the same approach as in
the last subsection to obtain a straight-line program for
$\bB=\Theta_\rho(\frkB)$, simply by computing all monomials of
respective weighted degrees at most $(\gamma_1, \ldots, \gamma_s)$ and
$(\delta_1, \ldots, \delta_q)$, combining them to form the polynomials
$(1-t) \cdot \Theta_\rho(\frkr) + t \cdot \g$ and the matrix
$(1-t)\cdot \Theta_\rho(\frkM)+t\cdot \F$ and taking the $p$-minors of
the latter. We benefit from a minor improvement here, as for a fixed
$\gamma_i$ or $\delta_j$ we can compute all these monomials in an
incremental manner, starting from the monomial $1$, foregoing the use
of repeated squaring: this saves a factor $n \log(d)$.  Altogether,
this results in a straight-line program of size
$$\Gamma \in O \left ( (a'_1 + \cdots + a'_s + p(b'_1 + \cdots + b'_q)) +
p^4 {q \choose p} \right )$$ to compute all entries of $\bB$.

Recall that a term such as $a'_i$ denotes the number of monomials of
weighted degree at most $\gamma_i$ in $n$ variables, with $\gamma_i
\le d$ for all $i$ (and similarly for $b'_j$, for the weighted degree
bound $\delta_j$). A crude bound is thus $a'_i,b'_j \le \binom{n+d}{n}$,
resulting in the estimate
\begin{equation}\label{eqdef:beta}
\Gamma \in O \left (n^2 \binom{n+d}{n} + n^4  {q \choose p} \right ).
\end{equation}
This is not the sharpest possible bound. 
Bounding $a'_i$ by the volume of the non-negative simplex defined by
$$w_1 (e_1-1) + \cdots + w_n(e_n-1) \le \gamma_i$$ results in the
upper bound $a'_i \le (\gamma_i+w_1 + \cdots +w_n)^n/(n!w_1\cdots
w_n)$.  Using~\cite{Aharon72} and \cite[Theorem~1.1]{YauZhang06} gives
more refined results for $a'_i$ and $b'_j$ and hence also for
$\Gamma$.

\paragraph{Number of solutions of the start system.}
As in the case of sparse polynomials, we take $\rho$ in the open set
$\Open \subset \KKbar{}^N$ of Proposition~\ref{prop:hyp}
and set $\bB=\Theta_\rho(\frkB)$. In this case, the solutions of the start
system $\bB_{t=0}$ are the disjoint union of the solutions of systems
$\Theta_\rho(\frkM_{\j},\r)$, with $\frkM_{\j} = (\frkm_{i_1}, \dots,
\frkm_{i_{n-s}})$ for $\j=\{i_1, \dots, i_{n-s}\} \subset \{1, \dots,
q\}$.

By the weighted B\'ezout theorem, the system
$\Theta_\rho(\frkM_{\j},\r)$ has 
$$c_\j = \frac{\gamma_1 \cdots \gamma_s\delta_{i_1} \cdots
  \delta_{i_{n-s}}}{w_1 \cdots w_n}$$ solutions in
$\KKbar{}^{n}$. Taking the sum over all subsets $\j$ of $\{1, \ldots,
q\}$ of cardinality $n-s$, we deduce that the number of solutions of
$\bB_{t=0}$ is at most
\begin{equation}\label{weighted:c}
    c =\sum_{\i} c_\i = \frac{\gamma_1 \cdots \gamma_s  \, \eta_{n-s}(\delta_1,
      \dots, \delta_q) }{w_1 \cdots w_n},
\end{equation}
where $\eta_{n-s}(\delta_1, \dots, \delta_q)$ is the elementary
symmetric polynomial of degree $n-s$ in $\delta_1, \dots, \delta_q$.
The discussion following Proposition~\ref{prop:hyp} implies that the
system $\bB_{t=1}$ which we want to solve admits at most $c$
isolated solutions.

\paragraph{Solving the start system.}
To find these solutions, as in the previous subsection, we solve all
systems $\Theta_\rho(\frkM_{\j},\r)$ independently. We are not aware
of a dedicated algorithm for weighted-degree polynomial systems whose
complexity would be suitable; instead, we rely on the geometric
resolution algorithm as presented in~\cite{GiLeSa01}. In what follows,
our first requirement is that $\rho$ be in the open set $\Open \subset
\KKbar{}^N$ of Proposition~\ref{prop:hyp}, but  we will add finitely
many Zariski-open conditions on~$\rho$.
 
For a subset $\i = \{i_1, \ldots, i_{n-s}\}\subset \{1, \ldots, q\}$,
let $(d_{\i,1},\dots,d_{\i,n})$ denote the sequence $(\gamma_1,
\ldots, \gamma_s, \delta_{i_1},\dots, \delta_{i_{n-s}})$ sorted in
non-decreasing order; we write
\begin{equation} \label{eq:kappa}
  \kappa_\i = \max_{1\leq k \leq
    n}(d_{\i,1} \cdots d_{\i,k} w_{k+1} \cdots w_n)\quad \text{ and
  }\quad \kappa = \sum_{\i = \{i_1, \ldots, i_{n-s}\}\subset \{1,
    \ldots, q\}} \kappa_\i.
\end{equation}
Recall as well that we set $d =
\max(\gamma_1,\dots,\gamma_s,\delta_1,\dots,\delta_q)$. 

\begin{lemma}\label{prop:degreebound} 
   For $\i =\{i_1, \dots, i_{n-s}\} \subset \{1, \dots, q\}$, and a
   generic $\rho \in \KKbar{}^N$, one can solve
   $\Theta_\rho(\frkM_{\j},\r)$ by a randomized algorithm that uses
   \[
   \softO \left (n^4 \Gamma d^2 \left (\frac{\kappa_i}{w_1\cdots
         w_n}\right )^2\right ) 
   \]
   operations in $\KK$.
\end{lemma}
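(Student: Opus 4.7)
The plan is to invoke the geometric resolution algorithm of Giusti--Lecerf--Salvy~\cite{GiLeSa01}, whose randomized complexity on a system of $n$ polynomials in $n$ variables is of the form $\softO(n^4 L d^2 \delta^2)$, where $L$ is the length of a straight-line program computing the input, $d$ is the maximum classical total degree of the equations, and $\delta$ is the maximum geometric degree of the intermediate varieties $V(f_1,\dots,f_k)$ for $k=1,\dots,n$ along a Noether-normalized resolution. For our system $\Theta_\rho(\frkM_\j,\r)$, we already have $L\le\Gamma$ from~\eqref{eqdef:beta}, and the classical total degree of each equation is at most its weighted degree and hence at most $d$. The whole task therefore reduces to bounding $\delta$ by $\kappa_\j/(w_1\cdots w_n)$.

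I would first enlarge the Zariski-open set $\Open$ of Proposition~\ref{prop:hyp} by finitely many additional genericity conditions on $\rho$ so that (a) the subsystem $\Theta_\rho(\frkM_\j,\r)$ forms a reduced zero-dimensional regular sequence (this follows by restricting the proof of Proposition~\ref{prop:start_pro} to the single index set $\j$, using that each $\calA_i$ and $\calB_{i_j}$ contains the origin), and (b) a generic (weight-compatible) linear change of coordinates places the system in Noether position at each intermediate step, with the last $n-k$ variables playing the role of free parameters at step $k$. I would also reorder the equations so that their weighted degrees coincide with the sorted sequence $(d_{\j,1},\dots,d_{\j,n})$, and the variables so that their weights are sorted $w_1\le\cdots\le w_n$.

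The key step is the weighted B\'ezout bound on each intermediate geometric degree. At step $k$, the geometric degree of $V(f_1,\dots,f_k)$ equals the number of points in a generic fiber of the projection onto the last $n-k$ coordinates. This fiber is the solution set in $\KKbar{}^k$ of the system obtained by specialising $x_{k+1},\dots,x_n$ to generic values in $f_1,\dots,f_k$, which is itself a generic weighted polynomial system of weighted degrees $(d_{\j,1},\dots,d_{\j,k})$ in the variables $x_1,\dots,x_k$ of weights $(w_1,\dots,w_k)$. Applying weighted B\'ezout~\eqref{weighted:bezout} to this specialised system yields a bound of $d_{\j,1}\cdots d_{\j,k}/(w_1\cdots w_k)=d_{\j,1}\cdots d_{\j,k}\,w_{k+1}\cdots w_n/(w_1\cdots w_n)$. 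Taking the maximum over $k=1,\dots,n$ gives $\delta\le\kappa_\j/(w_1\cdots w_n)$, and substitution into the \cite{GiLeSa01} complexity estimate yields the claimed bound.

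The main obstacle will be justifying the compatibility of the Noether position argument with the weight filtration: standard proofs of Noether position allow arbitrary linear changes of coordinates, whereas in the weighted setting only triangular changes mixing a variable with others of the same or smaller weight preserve weighted degrees. Verifying that these restricted changes still suffice to achieve generic intermediate projections, and that the weighted B\'ezout count indeed controls what the algorithm of~\cite{GiLeSa01} measures as the intermediate degree, is the technical heart of a full proof.
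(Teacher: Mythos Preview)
Your overall strategy---invoking the geometric resolution algorithm of~\cite{GiLeSa01} and bounding the maximal intermediate degree by $d_{\j,1}\cdots d_{\j,k}/(w_1\cdots w_k)$---is exactly the paper's, including the preliminary sorting of the equations by weighted degree and the appeal to Proposition~\ref{prop:start_pro} for the reduced-regular-sequence property. The divergence is in how $\deg V_\ell$ is actually bounded. You identify it with the cardinality of a generic fibre of the projection onto the last $n-\ell$ coordinates; that equality holds only under Noether position, which is precisely the obstacle you flag and leave unresolved.

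The paper avoids this obstacle entirely. It uses the elementary characterisation $\deg V_\ell = \#\bigl(V_\ell \cap V(P_1)\cap\cdots\cap V(P_{n-\ell})\bigr)$ for \emph{generic} affine hyperplanes $P_1,\dots,P_{n-\ell}$, then applies Gaussian elimination to the $P_i$ to rewrite them as
\[
x_{\ell+1}-\wp_{\ell+1}(x_1,\dots,x_\ell),\ \dots,\ x_n-\wp_n(x_1,\dots,x_\ell),
\]
where each $\wp_j$ is a degree-$1$ polynomial in $x_1,\dots,x_\ell$. Substituting $x_j\mapsto\wp_j$ into the first $\ell$ equations $h_1,\dots,h_\ell$ produces $g_1,\dots,g_\ell\in\KK[x_1,\dots,x_\ell]$, and the sorted-weight hypothesis $w_1\le\cdots\le w_n$ gives $\wdeg(\wp_j)\le w_\ell\le w_j$, so $\wdeg(g_k)\le d_{\j,k}$ automatically. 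Weighted B\'ezout on this $\ell\times\ell$ system then yields the bound directly. Thus no change of coordinates is performed at all: the definition of degree via generic linear sections, combined with Gaussian elimination on those linear forms, already respects the weight filtration. The ``technical heart'' you anticipate simply does not arise in the paper's argument.
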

\begin{proof}
  The polynomials $\Theta_\rho(\frkM_{\j},\r)$ have weighted degrees
  at most $(\gamma_1, \ldots, \gamma_s, \delta_{i_1}, \ldots,
  \delta_{i_{n-s}})$. We first reorder these equations
 in non-decreasing order of weigthed
  degree; we write the reordered sequence of polynomials as
  $(h_1,\dots,h_n)$, their respective weighted degrees being at most
  $(d_{\i,1},\dots,d_{\i,n})$.

  By Proposition~\ref{prop:start_pro}, since the supports of
  $\frkM_{\j}$ and $\r$ contain the origin, for a generic choice of
  $\rho$, the equations $\Theta_\rho(\frkM_{\j},\r)$ define a reduced
  regular sequence (possibly terminating early and thus defining the
  empty set). We can thus apply the geometric resolution algorithm as
  in~\cite[Theorem~1]{GiLeSa01}.   

  The algorithm in~\cite{GiLeSa01} takes its input
  represented as a straight-line program. To obtain one, we take our
  straight-line program of length $\Gamma$ that computes $\bB$ and set
  $t=0$; the resulting straight-line program computes all
  $\Theta_\rho(\r)$ and $\Theta_\rho(\frkm_1,\dots,\frkm_q)$, and in
  particular $\Theta_\rho(\frkM_{\j})$.
  We deduce that we can compute a zero-dimensional parametrization of
  the solutions of $\Theta_\rho(\frkM_{\j},\r)$ using $ \softO (n^4
  \Gamma d^2 \mathbf{\Sigma}_\i^2 )$ operations in $\KK$. Here,
  $\mathbf{\Sigma}_\i$ is the maximum of the degrees of the
  ``intermediate varieties'' $V_1,\dots,V_n$, where $V_i$ is defined
  by the first $i$ equations in $\Theta_\rho(\frkM_{\j},\r)$.  Hence,
  to conclude, it suffices to prove that $\mathbf{\Sigma}_\i \le
  \kappa_\i/(w_1\cdots w_n)$.

  Fix an index $\ell$ in $\{1,\dots,n\}$. We identify degree-1
  polynomials $ P = p_{ 0} + p_{ 1} x_1 + \cdots + p_{ n} x_n$ in
  $\KKbar[x_1, \ldots, x_n]$ with points in $\KKbar{}^{n+1}$. Then,
  there exists a non-empty Zariski open set $\mathscr{P}\subset
  \KKbar{}^{(n+1)(n-\ell)}$ such that for $(p_{i,j})_{0\leq j\leq n,
    1\leq i \leq n-\ell}\in \mathscr{P}$, defining $P_i$ as
  \[
    P_i = p_{i, 0} + p_{i, 1} x_1 + \cdots + p_{i, n} x_n
  \] 
  implies that $V_\ell \cap V(P_1) \cdots \cap V(P_{n-\ell})$ has
  cardinality $\deg(V_\ell)$. Up to taking the $p_{i,j}$'s in the
  intersection of $\mathscr{P}$ with another non-empty Zariski open
  set, one can perform Gaussian elimination to rewrite
  $P_1,\dots,P_{n-\ell}$ as
  \[
  x_{\ell+1} - \wp_{\ell+1}(x_1, \dots, x_{\ell}),
  \dots,
  x_{n} - \wp_n(x_1, \dots, x_{\ell}).
  \]
  For $k=1,\dots, \ell$, let $ g_k(x_1, \dots, x_\ell) = h_k(x_1,
  \dots, x_\ell, \wp_{\ell+1}(x_1, \dots, x_\ell), \dots, \wp_n(x_1,
  \dots, x_{\ell}))$ in $\KK[x_1, \dots, x_\ell]$. Because the
  sequence of weights is non-decreasing, these have respective
  weighted degrees at most $d_{\i,1},\dots,d_{\i,\ell}$ and, by construction,
  $V(g_1, \dots, g_\ell)$ is finite and $\deg(V_\ell) = \deg(V(g_1,
  \dots, g_\ell))$. Using the weighted B\'ezout's theorem implies
  \[
    \deg(V(g_1, \dots, g_\ell)) \leq 
    \frac{d_{\i,1} \cdots d_{\i,\ell}}{w_1 \cdots w_\ell} = 
\frac{ d_{\i,1} \cdots d_{\i,\ell} w_{\ell+1} \cdots w_n}{w_1 \cdots w_n}
=\frac {\kappa_\i}{w_1 \cdots w_n}.
\qedhere
\]
Taking all possible $\i$ into account, we see that for a generic
$\rho$ we can compute zero-dimensional parametrizations for all
$\Theta_\rho(\frkM_{\j},\r)$ using $$ \softO \left (n^4 \Gamma d^2
\left (\frac{\kappa}{w_1\cdots w_n}\right )^2\right )$$ operations in
$\KK$. As in the previous subsection, taking the union of all these
parametrizations does not introduce any added cost.
\end{proof}

\paragraph{Degree of the homotopy curve.}
Finally, we need an upper bound on the precision $t^e$ to which
we do the computations. As before, a suitable upper bound is the
number of isolated intersection points in $\KKbar{}^{n+1}$ between
$V(\bB)$ and a generic hyperplane.

Let $ \zeta=\zeta_0 + \zeta_1\, x_1 + \cdots + \zeta_n \, x_n +
\zeta_{n+1} t$ be a linear form defining such a hyperplane (here, we
take $\zeta_i\in \KK$). We are interested in counting the isolated
solutions of all equations $\g'=(\zeta,(1-t) \cdot \Theta_\rho(\frkr) + t
\cdot \g)$, and all $p$-minors of $\F'= (1-t)\cdot \Theta_\rho(\frkM) + t
\cdot \F$, that is, of $V_p(\F',\g')$.

Assign weight $w_t=1$ to $t$, so the weighted degree of $\zeta$ is
$w_n$. Then, the system above is of the kind 
considered in this section, but with $n+1$ variables instead of $n$,
and $s+1$ equations $\g'$ instead of $s$.  The weighted degrees of the
equations $\g'$ are $(w_n,\gamma_1+1,\dots,\gamma_s+1)$ and the weighted
column degrees of $\F'$ are $(\delta_1+1,\dots,\delta_q+1)$. As we
pointed out when counting the solutions of the start system, this
implies that our equations admit at most $e$ isolated solutions,
with
\begin{equation}\label{weighted:e}
    e  = \frac{(\gamma_1+1) \cdots (\gamma_s+1)  \, \eta_{n-s}(\delta_1+1,
      \dots, \delta_q+1) }{w_1 \cdots w_{n-1}},
\end{equation}
where $\eta_{n-s}$ is the elementary symmetric polynomial of degree
$n-s$.

\paragraph{Completing the weighted homotopy algorithm.}
The previous paragraphs allow us to use the {\sf Homotopy} algorithm
from~Proposition~\ref{pro:start_mat}; we obtain the following result.

\begin{theorem}\label{thm:homotopy}
  The set $V_p(\F, \g)$ admits at most $c$ isolated solutions, counted
  with multiplicities. There exists a randomized algorithm which
  takes $\g$ and $\F$ as input and computes a zero-dimensional
  parametrization of these isolated solutions 
  using
  \[\softO\Big( \big(c(e + c^5)+  \, d^2 \,
  \big(\frac{\kappa}{ w_1\cdots w_n}\big)^2\big)n^4 \Gamma\Big)\]
  operations in $\KK$, where
  $\Gamma,c,\kappa,e$ are as in
  respectively~\eqref{eqdef:beta},~\eqref{weighted:c},~\eqref{eq:kappa}
  and~\eqref{weighted:e}.
\end{theorem}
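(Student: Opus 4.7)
The plan is to bundle together the ingredients assembled in the paragraphs immediately preceding the statement and feed them into the {\sf Homotopy} template of Proposition~\ref{pro:start_mat}. First, choose $\rho \in \KKbar{}^N$ in the intersection of finitely many non-empty Zariski open sets: the set $\Open$ from Proposition~\ref{prop:hyp} (so that $\bB:=\Theta_\rho(\frkB)$ satisfies both conditions $(i)$ and $(ii)$ of Proposition~\ref{pro:start_mat}), the open set in which the geometric resolution algorithm of~\cite{GiLeSa01} succeeds on each $\Theta_\rho(\frkM_\i,\frkr)$ (invoked in Lemma~\ref{prop:degreebound}), and an open set guaranteeing that the hyperplane intersection defining $\varrho$ behaves generically. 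The intersection remains non-empty because each constraint is Zariski-open; standard randomization then produces such a $\rho$ with high probability.

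Next, bound the three parameters entering the cost of Proposition~\ref{pro:start_mat}. The size of the input straight-line program $\beta$ is bounded by $\Gamma$, as established in~\eqref{eqdef:beta} from the incremental enumeration of weighted monomials plus a division-free minor computation. The number $\chi$ of solutions of $\bB_{t=0}$ is bounded by $c$: the proof of Proposition~\ref{prop:start_pro} identifies the solutions of the start system with the disjoint union over subsets $\i \subset \{1,\dots,q\}$ of cardinality $n-s$ of the solutions of $\Theta_\rho(\frkM_\i,\frkr)$, and weighted Bézout applied to each such system, summed over $\i$, gives exactly the quantity $c$ of~\eqref{weighted:c}. Finally, the homotopy-curve bound $\varrho$ is bounded by $e$: a generic linear form $\zeta$ in $t,x_1,\dots,x_n$ adds one equation of weighted degree $w_n$ (after giving $t$ weight $1$) and shifts all weighted degrees of $\g$ and the columns of $\F$ by $1$; applying the same weighted Bézout count of the previous paragraph to this enlarged system of $n+1$ variables yields~\eqref{weighted:e}.

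With these bounds in hand, apply Lemma~\ref{prop:degreebound} to solve the start system and obtain a zero-dimensional parametrization of $V(\Theta_\rho(\frkM,\frkr))$; its cost is $\softO(n^4 \Gamma d^2 (\kappa/(w_1\cdots w_n))^2)$, and gluing the parametrizations across the subsets $\i$ via~\cite[Lemma~J.3]{SafeySchost17} incurs no overhead. Then feed the resulting parametrization together with the straight-line program of length $\Gamma$ and the bound $\varrho \le e$ into the {\sf Homotopy} algorithm of Proposition~\ref{pro:start_mat}, whose cost specializes, using $\chi \le c$ and $\beta \le \Gamma$, to $\softO(c(e+c^5)n^4 \Gamma)$ (replacing the factor $\binom{q}{p}$ from Theorem~\ref{thm:sparsedeterminantal} is unnecessary here since it is absorbed in $\Gamma$ through the $p$-minor computation). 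Summing the two costs yields the claimed total, and the conclusion that $V_p(\F,\g)$ has at most $c$ isolated solutions (with multiplicities) follows directly from Proposition~\ref{pro:start_mat} applied to the bound $\chi \le c$.

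No step is really an obstacle: the main content has already been proved in Sections~\ref{sec:colum_homotopy} and~\ref{sec:complexity}. The only mildly delicate point is checking that the various genericity conditions on $\rho$ are simultaneously realizable, but this is a finite intersection of Zariski-open sets each of which has been shown to be non-empty.
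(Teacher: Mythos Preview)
Your proposal is correct and follows essentially the same approach as the paper: the theorem is a ``collecting'' result whose proof in the paper amounts to the single sentence ``The previous paragraphs allow us to use the {\sf Homotopy} algorithm from Proposition~\ref{pro:start_mat},'' and you have accurately fleshed out exactly the assembly of ingredients (the straight-line program bound $\Gamma$, the start-system count $c$ via weighted B\'ezout, the start-system solving cost from Lemma~\ref{prop:degreebound}, and the homotopy-curve bound $e$) that the paper established in the preceding paragraphs. One minor point: the genericity condition for the hyperplane $\zeta$ is on the coefficients $\zeta_i$, not on $\rho$, but this does not affect the argument.
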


\section{Example}
\label{sec:ex}
In this section  we provide an example illustrating the steps of our homotopy algorithm. 
Let  
$$\g = ( 99x_1^3+92x_1^2-228x_1x_2+67x_1-140x_2+98x_3+25)  \in \Q[x_1, x_2,
x_3]$$ and $\F \in \Q[x_1, x_2, x_3]^{2 \times 3} $ be
{\scriptsize
\[
\left( \begin{array}{ccc} 9x_1^2+65471x_1+59x_2+42308x_3+65504 &
86x_1^2+65460x_1+65414x_2+12381x_3+44  &   65477x_1+59898x_3+76 \\ & & \\
 65501x_1^2+51x_1+65466x_2+57496x_3+35 & 16x_1^2+99x_1+65503x_2+17950x_3+31 & 65454x_1+
41178x_3 + 65453
\end{array} \right) .
\]
}
The support of $g$ is $\calA = \{(3, 0, 0), (2, 0, 0), (1, 1, 0), (1,
0, 0), (0, 1, 0), $ $(0, 0, 1), (0, 0, 0)\} \subset \Z^3$ with
unions of the column supports of $\F$ being
\begin{align*}
\calB_1 &= \{(2, 0, 0), (1, 0, 0), (0, 1, 0), (0, 0, 1), (0, 0, 0)\},
  \\ \calB_2 &=
\{(2, 0, 0), (1, 0, 0), (0, 1, 0), (0, 0, 1), (0, 0, 0)\},  \\
   \calB_3 &= \{(1, 0, 0),
(0, 0, 1), (0, 0, 0)\}. 
\end{align*}
\paragraph*{\bf{Start system.}}
The start system for $(\F, g)$ is built as follows. Let 
$r = ~88x_1^3~-~82x_1^2~-~70x_1x_2\\~+~41x_1~+~91x_2~+~29x_3+70
\in \Q[x_1, x_2, x_3]$ a polynomial supported by
$\calA$ and define
 $m_1 =~-78x_1^2~-~4x_1~+~5x_2~-~91x_3~ -~ 44,~ m_2 =
 63x_1^2+10x_1-61x_2-26x_3-20,\, {\rm and} \, m_3 = 88x_1+95x_3 + 9$,  
polynomials
 in $\Q[x_1, x_2, x_3]$ supported by $(\calB_1, \calB_2,
 \calB_3)$.  The starting polynomial system $\r = (r)$ and the
 start matrix are given as
\[
\M = \begin{pmatrix}
-62m_1 & 26 m_2 & 10m_3 \\
-83 m_1 & -3 m_2 & -44m_3
\end{pmatrix} \in \Q[x_1, x_2, x_3]^{2 \times 3}. 
\]
We remark that the coefficients in the start vector and start matrix for this example were chosen randomly, in this case with the help of the rand() command in Maple.

\paragraph*{\bf{A parametrization of the start system.}}
The set of $2$-minors of $\M$ is given by $(2344 m_1 m_2,\\ 3558 m_1 m_3,
-1114m_2m_3)$ and hence  $V_2(\M, r) = V_1 \cup V_2 \cup V_3$, where
\[
V_1 =  V(m_1, m_2, r), \, V_2 = V(m_1, m_3, r), \, {\rm and} \,V_3 =
V(m_2, m_3, r). 
\] 
Parametrizations of $V_1, V_2,$ and $V_3$ are given by 
{\footnotesize
\begin{align*}
\scrR_{0, 1} = ~~ &
\big((10671923044484y^3+164650405712264y^2+ 
 541980679674061y  
~+  393540496795784,  \\ &
  \frac{23707677043321206}{205138445880446701}y^2 + 
\frac{197994419338092137}{205138445880446701} y  
 + \frac{3859258707817950}{205138445880446701},  \\ & 
\frac{2817387683743776}{205138445880446701}y^2 
~-  \frac{334804957251324375}{205138445880446701}y-  
\frac{199554818581221524}{205138445880446701},     
  y), x_3\big), \\
\scrR_{0, 2} = ~~& \big(
                             (1076005625y^3+2749690925y^2+2278375403y+797867887,
  \\
&~~  -\frac{95}{88}y -\frac{9}{88}, \frac{70395}{3872}y^2 + \frac{201161}{9680}y +
  \frac{171943}{19360}, y), x_3\big), \\
\scrR_{0, 3}= ~~& 
\big(
  (410682625y^3+773879025y^2+2045246267y-666910765, \\
&~~-\frac{95}{88}y -\frac{9}{88},
  \frac{568575}{472384}y^2-\frac{88607}{236192}y-\frac{157697}{472384}, y), x_3\big).
\end{align*}
}
Taking the union of $(\scrR_{0, i})_{1 \leq i \leq 3}$ gives a parametrization
$\scrR_0$ of $V_p(\M, r)$ with 
{\footnotesize
\begin{align*}\scrR_0 = ~~&   ((q_0, v_{0, 1}, v_{0, 2}, v_{0, 3}), \Lambda_0)\\
= ~~&  \big((4715888798904593238258009062500y^9+ \cdots, \\
&~~ \frac{10476346966766553878790167132343750}{205138445880446701}y^8 + 
\cdots, \\
&~~ \frac{2265193491697540283699777221137124035318470625}{24226029904697233601296}y^8+
\cdots, \\
&~~15866264491953179878625y^7+ \cdots
 ), x_3\big).
\end{align*}
}

\paragraph*{\bf{Degree bounds.}} 
The mixed volumes associated to our
square sub-systems are ${\sf MV}_1 = {\sf MV}(\conv(\calA), \conv(\calB_1),
\conv(\calB_2)) = 3$, ${\sf MV}_2 = {\sf MV}(\conv(\calA), \conv(\calB_1),
\conv(\calB_3)) = 3$, and finally ${\sf MV}_3 = {\sf MV}(\conv(\calA),$ $\conv(\calB_2),
\conv(\calB_3)) = 3$. So $\chi = {\sf MV}_1+{\sf MV}_2+{\sf MV}_3 = 9$ which
is a bound on the number of isolated solutions of $V_2(\F, g)$. Note
that this number coincides with  the actual number of 
isolated solutions of $V_2(\M, r)$ as the degree of $q_{0}$ equals $9$. 

\paragraph*{\bf{A parametrization $\scrR_1$ of $V_2(\F, g)$.}} 
We apply the
{\sf Homotopy} algorithm to the system $(M_2((1-t)\F + t\M), (1-t)r +
tg)$ and $\scrR_0$ to obtain $\scrR_1$.  As the coefficients of the 
result over $\Q$ are quite large we  illustrate this  calculation over
$\mathbb{F}_{65521}$, the finite field of 
$65521$ elements.  In this case we obtain
%
{\footnotesize
\begin{align*}\scrR_0 = ~~& \big(
                        (y^9+42377y^8+63439y^7+23268y^6+1541y^5+21916y^4
 \\
&~+24479y^3+ 1064y^2+47617y+765, 18447y^8+58286y
^7+48619y^6  
 \\ 
&~+49312y^5+~42721y^4+  44021y^3+47621y^2+39038y+13072, \\ 
&~ 9852y^8   
+30892y^7+~29236y^6+ 63043y^5+623y^4+8249y^3 \\
&~+ 22956y^2+ 23577y + 41427,~3y^7+19233y^6+56323y^5+58151y^4 \\
&~ +  8939y^3 + 30577y^2+ 13156y), x_3\big) 
\end{align*}
}
and
{\footnotesize
\begin{align*}
\scrR_1 =~~&  \big( (y^9+27502y^8+1022y^7+42474y^6+21370y^5+47501y^4 \\
&~+37694y^3+~ 13474y^2+49870y+26489,
  19690y^8+28497y^7 \\
&~+23045y^6+29265y^5+~ 32212y^4+8948y^3+16460y^2 \\ 
&~+19357y +9600,
  26426y^8+24119y^7+~ 48429y^6+34031y^5\\ 
&~+32994y^4 +13559y^3+34993y^2+59636y+64778, y),
  x_3\big). 
\end{align*}
}

We note that using the non-sparse homotopy algorithm 
from \cite{HSSV18} produces a degree bound of $24$, a considerable
over estimate of the number of isolated zeros.

\section{Topics for future research}

We have presented a new homotopy algorithm for determining isolated
solutions of algebraic sets $V_p(\F,\g)$ for $\F$ a $p \times q$
matrix and $\g$ a vector having entries from a multivariate polynomial
domain. Our algorithm determines the bounds central to homotopy
algorithms based on the column support of the matrix $\F$.  Our column
supported homotopy algorithm can be applied to the case where our
entries come from a weighted polynomial domain.  Such weighted domains
arise when we determine the isolated critical points of a symmetric
function $\phi$ defined over a variety $V(\f)$ generated by symmetric
functions in $\f$. The resulting complexity is improved by a factor
depending on the size of the symmetric group.

Still regarding critical point computations, but for non symmetic
input $\F,\g$, the natural bounds for a sparse homotopy would come from
considering the row support rather than the column support of $\F$. An
interesting approach would be the follow the algorithm given
in~\cite{HSSV18} for dense polynomials. However, proving that in the
sparse case, the corresponding start systems satisfy the genericity
properties we need is not straightforward; this is the subject of
future work.

\paragraph*{Acknowledgements.} G. Labahn is supported by the 
Natural Sciences and Engineering Research Council of Canada (NSERC), 
grant number RGPIN-2020-04276.  \'E. Schost is supported
by an NSERC Discovery Grant.  T.X. Vu is supported by a labex
CalsimLab fellowship/scholarship. The labex CalsimLab, reference
ANR-11-LABX-0037-01, is funded by the program ``Investissements
d'avenir'' of the Agence Nationale de la Recherche, reference
ANR-11-IDEX-0004-02. M. Safey El Din and T.X. Vu are supported by the ANR grants
ANR-18-CE33-0011 \textsc{Sesame}, ANR-19-CE40-0018 \textsc{De Rerum Natura} and
ANR-19-CE48-0015 \textsc{ECARP}, the PGMO grant \textsc{CAMiSAdo} and the
European Union's Horizon 2020 research and innovation programme under the Marie
Sklodowska-Curie grant agreement N. 813211 (\textsc{POEMA}).



\end{document}